\newtheorem{definition}{Definition}
\newtheorem{theorem}{Theorem}
\newtheorem{proposition}{Proposition}
\newtheorem{corollary}{Corollary}
\newtheorem{remark}{Remark}
\begin{document}

\title{Projective robustness for quantum channels and measurements and their operational significance}

\author{Mingfei Ye}
\email{yemingfei1223@163.com}
\affiliation{School of Mathematics and Statistics, Shaanxi Normal University, Xi'an 710062, China}

\author{Yu Luo}
\email{luoyu@snnu.edu.cn}
\affiliation{College of computer Science, Shaanxi Normal University, Xi'an 710062, China}

\author{Zhihui Li}
\email{lizhihui@snnu.edu.cn}
\affiliation{School of Mathematics and Statistics, Shaanxi Normal University, Xi'an 710062, China}

\author{Yongming Li}
\email{liyongm@snnu.edu.cn}
\affiliation{School of Mathematics and Statistics, Shaanxi Normal University, Xi'an 710062, China}

\date{\today}

\begin{abstract}
\noindent \textbf{Abstract.} Recently, the projective robustness of quantum states has been introduced in [arXiv:2109.04481(2021)]. It shows that the projective robustness is a useful resource monotone and can comprehensively characterize capabilities and limitations of probabilistic protocols manipulating quantum resources deterministically. In this paper, we will extend the projective robustness to any convex resource theories of quantum channels and measurements. First, we introduce the projective robustness of quantum channels and prove that it satisfies some good properties, especially sub- or supermultiplicativity under any free quantum process. Moreover, we use the projective robustness of channels to give lower bounds on the errors and overheads in any channel resource distillation. Meanwhile, we show that the projective robustness of channels quantifies the maximal advantage that a given channel outperforms all free channels in simultaneous discrimination and exclusion of a fixed state ensemble. Second, we define the projective robustness of quantum measurements and prove that it exactly quantifies the maximal advantage that a given measurement provides over all free measurements in simultaneous discrimination and exclusion of two fixed state ensembles. Finally, within a specific channel resource setting based on measurement incompatibility, we show that the projective robustness of quantum channels coincides with the projective robustness of measurement incompatibility.\\
\\
\textbf{Keywords}: projective robustness, discrimination, exclusion, measurement incompatibility
\end{abstract}
\maketitle

\section{Introduction}

The resource-theoretic framework is a powerful tool to help us understand and manipulate resources. Various specific resource theories have been proposed and discussed under different physical constraints, such as well-known entanglement \cite{Plenio and Virmani 07,Horodecki et.al09}, coherence \cite{Aberg,Baumgratz et.al 14,Streltsov et.al 17}, magic \cite{Veitch et.al 14,Howard and Campbell 17}, non-Gaussianity \cite{Lami18A},  quantum thermodynamics \cite{Brand et.al13,Brand et.al15}, steering \cite{Gallego and Aolita15}, and imaginary \cite{Hickey and Gour 18,Wu et.al 21}. Moreover, due to the generality of this framework, it has been applied to study quantum measurements and quantum channels, and various resource theories about quantum measurements and channels have been proposed, such as measurement simulability and incompatibility \cite{Oszmaniec17,Oszmaniec19,Guerini17}, measurement informativeness \cite{Skrzypczyk19}, measurement coherence \cite{Baek20}, the magic and coherence of quantum channels \cite{Seddon19,Wang19,Saxena20}, and quantum memory of channels \cite{Takagi20}. These resource theories have significantly promoted the prosperous development of quantum informatics. To explore the common characteristics of resources, general resource theories have attracted much attention in recent years, including quantification and manipulation of general resource objects \cite{Chitambar and Gour19,Regula18,FGSL15B,Anshu18,Takagi and Regula 19,Takagi19,Oszmaniec19b,Lami et.al 21,Regula and Lami 21,Regula 21,Liu et.al19,Fang 20,Regula and Bu 20,Regula and Takagi21,Regula and Takagi21b,Ducuara20,Uola20,Ye et.al 21}.

Resource quantification plays a vital role in quantum resource theories. A common and effective approach to accomplish this goal is to define resource measures (or monotones) to measure and compare the resourcefulness contained in a given physical object. At present, ones have proposed plenty of resource measures \cite{Chitambar and Gour19,Regula18}. Such a way can effectively quantify the amount of resource contained in a quantum object, but it often lacks a direct relation with the operational significance of resources \cite{Lami et.al 21}. The generalized robustness measure \cite{Genoni99,Steiner03,Datta09} and weight measure \cite{Lewenstein98} not only fully overcome this shortcoming but also apply to general state resource quantification, even resource quantifications of general measurements and channels. These two measures have wide applications in operational tasks of quantum resource theories, for the former such as catalytic resource erasure \cite{Anshu18}, min-accessible information \cite{Skrzypczyk19,Takagi and Regula 19}, discrimination tasks \cite{Skrzypczyk19,Lami et.al 21,Takagi19,Skrzypczyk19b,Oszmaniec19b,Takagi and Regula 19,Uola19b,Carmeli19b,Mori20,Uola20b,Ducuara20A,Regula and Lami 21}, one shot distillation in general state resource theories \cite{Liu et.al19,Fang 20,Regula and Bu 20}, and one shot manipulation of channel resources \cite{Regula and Takagi21,Regula and Takagi21b}, for the latter such as single-shot information theory \cite{Ducuara20}, exclusion tasks \cite{Ducuara20,Uola20,Ducuara20A,Ye et.al 21}, and channel resource purification \cite{Regula and Takagi21}. In particular, in broad classes of resource theories, they can directly quantify the practical advantages that the resourceful object outperforms all resourceless objects in specific discrimination (or exclusion) tasks \cite{Lami et.al 21,Skrzypczyk19,Takagi and Regula 19,Takagi19,Skrzypczyk19b,Oszmaniec19b,Uola19b,Carmeli19b,Mori20,Uola20b,Regula and Lami 21,Ducuara20,Uola20,Ducuara20A,Ye et.al 21}, and can provide constraints on error rates and overheads for various state and channel distillation protocols \cite{Liu et.al19,Fang 20,Regula and Bu 20,Regula and Takagi21,Regula and Takagi21b}. 
Searching for such resource measures is always one of the cores of resource theories.

Hilbert projective metric introduced by Ref.\cite{Bushell73,Eveson73} has provided a sufficient and necessary condition for probabilistic transformations of pairs of quantum states \cite{Reeb11,Buscemi17}, it has not been applied as a resource monotone until recently projective robustness of quantum states was introduced in \cite{Regula 21}. It is shown that the projective robustness of quantum states not only satisfies a number of useful properties but also has significant applications in any probabilistic manipulation protocol and simultaneous discrimination and exclusion of a fixed channel ensemble. On the one hand, the quantum channels lie at the heart of the manipulation of quantum states. On the other hand, any meaningful quantum information processing task includes a measurement at the end. So it is quite natural and necessary to study the quantification and manipulation of quantum channels and measurements resources. However, the projective robustness has not been studied in the context of channels and measurements resource theories. In this paper, we will introduce two types of projective robustness, namely projective robustness for channels and measurements. For the former, we will study its  operational applications in any deterministic channel distillation protocol and simultaneous discrimination and exclusion of a fixed state ensemble. These two operational tasks form the foundation of the operational aspects of quantum theory. For the latter, we will study its operational applications in simultaneous discrimination and exclusion of two fixed state ensembles. It will promote our understanding of the common characteristics of resources and lay a good foundation for in-depth research to manipulate channel and measurement resources.

This paper is arranged as follows. Section II introduces all of the relevant concepts, including resource theories, some resource monotones, and state discrimination and exclusion tasks. In Sect. III, we define the projective robustness of channels and present that the measure obeys several useful properties. Moreover, we study its two operational applications. In Sect. IV, we introduce the projective robustness of measurements and its corresponding operational interpretation is given.  Particularly, within a specific channel resource theory setting based on measurement incompatibility, we discuss the relationship between projective robustness for channels and projective robustness for measurements. Finally, Section V offers the conclusions.
\section{Setting}
Given the Hilbert space of a finite-dimensional system $A$ with dimension $d_A$, let $\mathbb{L}(A)$ and $\mathbb{D}(A)$ be the set of the linear operators and the set of the density operators acting on this space, respectively. We use $\text{CPTP}(A\rightarrow B)$ to represent the set of quantum channels, i.e. completely positive and trace-preserving (CPTP) maps from $\mathbb{L}(A)$ to $\mathbb{L}(B)$.  For each channel $\mathcal{E}:A\rightarrow B$, its Choi matrix is $J_{\mathcal{E}}:=\mathrm{id} \otimes \mathcal{E}(\Phi^{+}) \in \mathbb{L}(R B)$, where $\Phi^{+}=\sum_{i, j}|i j\rangle\langle i j|$ is the unnormalized maximally entangled state in $\mathbb{L}(R A)$ and $R \cong A$. It is well known that $\mathcal{E}$ is completely positive if and only if $\mathcal{J_E}\geq 0$, trace preserving if and only if $\operatorname{Tr}_R(\mathcal{J_E})=\mathbb{I}$. The normalized Choi state is then $\widetilde{J}_{\mathcal{E}}:=J_{\mathcal{E}} / d_{A}$.  Let $\langle A, B\rangle=\operatorname{Tr}\left(A^{\dagger} B\right)$ represent the Hilbert-Schmidt inner product between operators.  If a map always maps a quantum channel to a valid quantum channel, then it is called a quantum superchannel \cite{Chiribella08}, that is, maps $\operatorname{CPTP}(A \rightarrow B) \rightarrow \operatorname{CPTP}(C \rightarrow D)$.

Throughout the paper, we will be interested in POVMs on finite dimensional Hilbert space $\mathbb{H} \approx \mathbb{C}^{d}$. A POVM $\mathbb{M}$ is a collection of POVM elements $\mathbb{M}=\left\{M_{i}\right\}_{i=1}^n$ with $M_i\geq 0~\forall i$ and $\sum_iM_i=\mathbb{I}$. The operators $M_i$ are called the effects of POVM $\mathbb{M}$. We denote the set of
POVMs on $\mathbb{C}^{d}$ with $n$ outcomes by $\mathcal{M}(d,n)$. This set has a natural convex structure: given two
POVMs $\mathbb{M}, \mathbb{N}\in \mathcal{M}(d,n)$, their convex combination $p\mathbb{M}+(1-p)\mathbb{N}$ is the POVM with $i-$th effect given by $pM_i+(1-p)N_i$. Note that $\mathcal{M}(d,n)$ is a convex subset of $(\operatorname{Herm}(\mathbb{H}))^{\times n}$ where $(\operatorname{Herm}(\mathbb{H}))$ is the set of Hermitian operator from $\mathbb{H}$ to itself. The
space $(\operatorname{Herm}(\mathbb{H}))^{\times n}$ can be regarded as $nd^2$ dimensional real vector space with the inner product defined by $\langle \mathbb{A}, \mathbb{B}\rangle=\sum_i^n \operatorname{Tr}(A_iB_i)$, where $\mathbb{A}=\{A_i\}, \mathbb{B}=\{B_i\}\in (\operatorname{Herm}(\mathbb{H}))^{\times n}$ \cite{Oszmaniec19b}.

\subsection{Resource theories}

A general state resource theory consists of two parts under some physical restrictions \cite{Chitambar and Gour19}. One is the free states which are available freely and we denoted the set of free states as $\mathbb{F}$.  The other is the free operations which are allowed within the given physical constraints. To remain as general as possible, we will only introduce two common and
intuitive assumptions about $\mathbb{F}$: it is convex and closed. The former means that simply taking a sequence of free states generates no resource, and the latter means that no resource can be generated by simply probabilistically mixing free states. As for free operations, we only assume the weakest possible constraint: a free operation does not generate any resource by itself.

When we aim to study channel resources \cite{Gour19a,Gour and Winter19,Liu20,Regula and Takagi21}, in the given physical setting, a specific subset of quantum channels $\mathbb{O}$ was chose as the free channels. We will assume that the set $\mathbb{O}$ is compact and convex. The compactness of $\mathbb{O}$ is a technical assumption that allows us to formally state our results.
Moreover, the free operations between the channels are free superchannels. A free superchannel $\Theta$ does not generate any resource by itself; that is, for any free channel $\mathcal{M}\in \mathbb{O}$, it holds that $\Theta(\mathcal{M})\in \mathbb{O}$. Let $\mathbb{S}$ represent the set of all such free superchannels. The above assumptions will make our results applicable to broad channel resource settings.

As for measurement resource \cite{Takagi and Regula 19,Oszmaniec19b}, we use a similar approach to define the set of free measurements as $\mathcal{M_F}\subseteq \mathcal{M}(d,n)$, and assume that $\mathcal{M_F}$ is convex and compact. A valid transformation $\Pi$ between measurements should transform a POVM into a POVM, i.e., $\Pi(\mathbb{M})=\{\Pi(M_i)\}$ is a valid POVM for any $\mathbb{M}=\{M_i\}\in \mathcal{M}(d,n)$. It means that a valid transformation between measurements should be a positive and trace-preserving map. Moreover, we assume that the class of free operations $\mathcal{O_F}$ consists of mappings $\Gamma : \mathcal{M}(d, n) \rightarrow \mathcal{M}(d, n)$ that (i) preserve the set of free measurements i.e., $\Gamma(\mathbb{N})\in \mathcal{M_F}$ for all $\mathbb{N}\in \mathcal{M_F}$, (ii) are convex-linear i.e., $\Gamma\left(p \mathbb{M}+(1-p) \mathbb{M}^{\prime}\right)=p \Gamma(\mathbb{M})+(1-p) \Gamma\left(\mathbb{M}^{\prime}\right)$ for all measurements $\mathbb{M}, \mathbb{M}^{\prime}\in \mathcal{M}(d, n)$.

\subsection{Resource monotones}
In this subsection, we will recall some resource monotones including generalized robustness and weight of states, their channel versions, and the projective robustness of states.

The generalized robustness and weight of states are defined as follows \cite{Steiner03,Datta09,Lewenstein98}:
\begin{equation}
\begin{aligned}
R_{\mathbb{F}}(\rho)=\inf\{\lambda \ | \ \rho\leq \lambda\sigma, \sigma\in \mathbb{F}\},\\
W_{\mathbb{F}}(\rho)=\sup\{\mu \ | \ \rho\geq \mu\sigma, \sigma\in \mathbb{F}\}.
\end{aligned}
\end{equation}
Using the corresponding Choi matrices, the quantum channel versions of both measures have been obtained \cite{Takagi and Regula 19, Uola20, Wilde et.al 21}
\begin{equation}
\begin{aligned}
R_{\mathbb{O}}(\mathcal{E})=\inf\{\lambda \ | \ \mathcal{J_E}\leq \lambda \mathcal{J_M}, \mathcal{M}\in \mathbb{O}\},\\
W_{\mathbb{O}}(\mathcal{E})=\sup\{\mu \ | \ \mathcal{J_E}\geq \lambda \mathcal{J_M}, \mathcal{M}\in \mathbb{O}\}.
\end{aligned}
\end{equation}
Next, we introduce the non-logarithmic variant of the max-relative entropy $D_{\max}$ \cite{Datta09}, which is defined as
\begin{equation}
R_{\max}(\rho\parallel\sigma)=2^{D_{\max}(\rho\parallel\sigma)}=\inf \{\lambda \ | \ \rho\leq\lambda\sigma\}.
\end{equation}
For any channel $\mathcal{E}, \mathcal{F}: A\rightarrow B$, the channel version of $R_{\max}$ is defined as \cite{Leditzky et.al 18, Wilde et.al 21}
\begin{equation}
\begin{aligned}
R_{max}(\mathcal{E}\parallel\mathcal{F})&=\max_{\psi_{RA}}R_{\max}(id\otimes\mathcal{E}(\psi_{RA})\parallel id\otimes\mathcal{F}(\psi_{RA}))\\
&=R_{\max}(\mathcal{J_E}\parallel \mathcal{J_F})\\
&=\inf \left\{\lambda: \mathcal{E}\leq\lambda\mathcal{F}\right\},
\end{aligned}
\end{equation}
where the inequality $\mathcal{E}\leq\lambda\mathcal{F}$ is understood as completely-positive (CP) ordering of super operators, i.e., $\lambda\mathcal{F}-\mathcal{E}$ is a CP map. Therefore, the generalized robustness and weight of quantum channels can be equivalently expressed as \cite{Regula and Takagi21}
\begin{equation}
\begin{aligned}
R_{\mathbb{O}}(\mathcal{E})=\min_{\mathcal{M}\in \mathbb{O}} R_{\max}(\mathcal{J_E}\parallel\mathcal{J_M}),\\
W_{\mathbb{O}}(\mathcal{E})^{-1}=\min_{\mathcal{M}\in \mathbb{O}} R_{\max}(\mathcal{J_M}\parallel\mathcal{J_E}).\\
\end{aligned}
\end{equation}
Recently, the projective robustness of quantum states based on the Hilbert projective metric is defined as \cite{Regula 21}
\begin{equation}
\begin{aligned}
\Omega_{\mathbb{F}}(\rho)&=\min_{\sigma\in \mathbb{F}}R_{\max}(\rho\parallel \sigma)R_{\max}(\sigma\parallel \rho),
\end{aligned}
\end{equation}
$\Omega_{\mathbb{F}}(\rho)<\infty$ if and only if there exists a free state $\sigma\in \mathbb{F}$ such that $\text{supp}\rho=\text{supp}\sigma$. Moreover, Eq.(6) has the following equivalent forms,
\begin{equation}
\begin{aligned}
\Omega_{\mathbb{F}}(\rho)&=\inf\{\lambda\mu \mid \rho\leq \lambda\sigma,\sigma\leq \mu\rho, \sigma\in \mathbb{F} \}\\
&=\inf\{\gamma \in \mathbb{R}_+\mid \rho\leq \widetilde{\sigma}\leq \gamma \rho, \widetilde{\sigma}\in \operatorname{cone}(\mathbb{F}) \}\\
&=\sup\left\{\frac{\langle A,\rho\rangle}{\langle B,\rho\rangle} \ \Big| \ \frac{\langle A,\sigma\rangle}{\langle B,\sigma\rangle}\leq 1, \forall\sigma \in \mathbb{F}, A, B \geq 0\right\},
\end{aligned}
\end{equation}
where $\operatorname{cone}(\mathbb{F})=\left\{\lambda\sigma \mid \lambda\in \mathbb{R}_+, \sigma\in\mathbb{F}\right\}$. Note that the closedness of $\mathbb{F}$ ensures that the infimum is achieved as long as it is finite.

\subsection{State discrimination and exclusion tasks}
\textit{State discrimination task}: Let us consider a scenario in which an ensemble of quantum states is given as $\{p_i,\sigma_i\}$, then we perform a chosen POVM measurement $\{M_i\}$ to distinguish which of the states was sent. For a given state ensemble, and measurement, the average probability of successfully discriminating the states is given by
\begin{equation}
p_{\text {succ }}\left(\left\{p_{i}, \sigma_{i}\right\},\left\{M_{i}\right\}\right)=\sum_{i} p_{i} \left\langle M_{i},\sigma_i \right\rangle,
\end{equation}
where $\sum_{i} p_{i}=1$ with $p_i\geq 0$ and $\sum_{i} M_{i}=\mathbb{I}$ with operators $M_i\geq0$. In this standard state discrimination task, our aim is to choose a measurement $\{M_i\}$ to maximize  Eq.(8). The generalized robustness of measurements is known to quantify the advantage that any resourceful measurement can provide over all free measurements \cite{Takagi and Regula 19,Oszmaniec19b}.

\textit{State exclusion task}: \cite{Bandyopadhyay14}
The state exclusion task can be seen as being the opposite of state discrimination. Our goal is to perform a given measurement $\{M_i\}_i$ and outputs a guess $g$ of a state that was not sent. Then, Eq.(8) is understood as the average probability of guessing incorrectly, and our goal to minimize this quantity. In state exclusion, it is the weight of measurements which quantifies the advantage \cite{Ducuara20,Uola20,Ye et.al 21}.
\section{Projective robustness of channels}
Quantum channels as the extension of quantum states play a significant vole in quantum information processing. In fact, they also can be seen as resourceful, and their operational characterization is currently an active research area \cite{Seddon19,Wang19,Saxena20,Takagi20,Regula and Takagi21,Regula and Takagi21b,Gour19a,Gour and Winter19,Liu20,Wilde et.al 21}. In this section, we will introduce the projective robustness of channels to quantify the resource amount contained a given channel.
\begin{definition}
For a given channel $\mathcal{E}$, its projective robustness is defined as
\begin{equation}
\begin{aligned}
\Omega_{\mathbb{O}}(\mathcal{E})&=\min_{\mathcal{M}\in \mathbb{O}}R_{\max}(\mathcal{E}\parallel\mathcal{M})R_{\max}(\mathcal{M}\parallel\mathcal{E})\\
&=\min_{\mathcal{M}\in \mathbb{O}} R_{\max}(\mathcal{J_E}\parallel\mathcal{J_M})R_{\max}(\mathcal{J_M}\parallel\mathcal{J_E}).
\end{aligned}
\end{equation}
\end{definition}
The first equality in Eq.(9) can be equivalently understood as the optimization over free channels $\mathcal{M}$ such that $\lambda\mathcal{M}-\mathcal{E}$ and $\mu\mathcal{E}-\mathcal{M}$ are CP maps.

\begin{theorem}
The projective robustness of channels $\Omega_{\mathbb{O}}(\mathcal{E})$ satisfies the following properties:\\
(i) $\Omega_{\mathbb{O}}(\mathcal{E})$ is finite if and only if there exists a free channel $\mathcal{M}\in \mathbb{O}$ such that $supp(\mathcal{J_M})=supp(\mathcal{J_E})$.\\
(ii) $\Omega_{\mathbb{O}}(k\mathcal{E})=\Omega_{\mathbb{O}}(\mathcal{E})$ for any $k>0$.\\
(iii) When $\mathbb{O}$ is convex, $\Omega_{\mathbb{O}}$ is quasiconvex: for any $t\in[0,1]$, it holds that
\begin{equation}
\Omega_{\mathbb{O}}(t\mathcal{E}+(1-t)\mathcal{F})\leq \max \{\Omega_{\mathbb{O}}(\mathcal{E}),\Omega_{\mathbb{O}}(\mathcal{F})\}.
\end{equation}\\
(iv) For any superchannel $\Theta\in \mathbb{S}$, it holds that
\begin{equation}
\Omega_{\mathbb{O}}(\Theta(\mathcal{E}))\leq \Omega_{\mathbb{O}}(\mathcal{E}).
\end{equation}
(v) When $\mathbb{O}$ is a compact convex set, $\Omega_{\mathbb{O}}$ can be computed as the optimal value of a conic linear optimization problem:
\begin{equation}
\begin{aligned}
&\Omega_{\mathbb{O}}(\mathcal{E})=\inf\{\gamma \in \mathbb{R}_+\mid \mathcal{J_E}\leq \mathcal{J}_{\widetilde{\mathcal{M}}}\leq \gamma \mathcal{J_E}, \mathcal{J}_{\widetilde{\mathcal{M}}}\in \operatorname{cone}(\mathbb{O}^J) \}\\
&\quad=\sup\{\langle A,\mathcal{J_E}\rangle \mid \langle B,\mathcal{J_E}\rangle=1, B-A\in \operatorname{cone}(\mathbb{O}^J)^*, A, B \geq 0\}\\
&\quad=\sup\left\{\frac{\langle A,\mathcal{J_E}\rangle}{\langle B,\mathcal{J_E}\rangle} \ \Big| \ \frac{\langle A,\mathcal{J_M}\rangle}{\langle B,\mathcal{J_M}\rangle}\leq 1, \forall\mathcal{J_M} \in \mathbb{O}^J, A, B \geq 0\right\},
\end{aligned}
\end{equation}
where $\mathbb{O}^J$ represent the set of Choi matrices corresponding to channels in $\mathbb{O}$. The $\operatorname{cone}(\mathbb{O}^J)$ is the closed convex cone generated by the set $\mathbb{O}^J$, and $\operatorname{cone}(\mathbb{O}^J)^*=\{X \mid \langle X,\mathcal{J_M}\rangle\geq 0, \forall\mathcal{J_M} \in \mathbb{O}^J\}$. Note that the compactness of $\mathbb{O}$ leads to the fact that the infimum is achieved as long as it is finite. \\
(vi) $\Omega_{\mathbb{O}}$ is lower semicontinuous, i.e. $\Omega_{\mathbb{O}}(\mathcal{E})\leq \mathop{\lim  \inf}\limits_{n\rightarrow \infty}\Omega_{\mathbb{O}}(\mathcal{E}_n)$ for any sequence $\{\mathcal{E}_n\}_n\rightarrow \mathcal{E}$.\\
(vii) It can be bounded as
\begin{equation}
\begin{aligned}
R_{\mathbb{O}}(\mathcal{E})W_{\mathbb{O}}(\mathcal{E})^{-1}\leq&\Omega_{\mathbb{O}}(\mathcal{E})\leq R_{\mathbb{O}}(\mathcal{E})R_{\max}(\mathcal{M}^*_R\parallel \mathcal{E})\\
&\Omega_{\mathbb{O}}(\mathcal{E})\leq W_{\mathbb{O}}(\mathcal{E})^{-1}R_{\max}(\mathcal{E}\parallel \mathcal{M}^*_W),
\end{aligned}
\end{equation}
where $\mathcal{M}_R^*$ is an optimal channel such that $R_{\mathbb{O}}(\mathcal{E})=R_{\max}(\mathcal{E}\parallel\mathcal{M}^*_R)$, and similarly $\mathcal{M}_W^*$ is an optimal channel such that $W_{\mathbb{O}}(\mathcal{E})=R_{\max}(\mathcal{M}^*_W\parallel\mathcal{E})^{-1}$, whenever such channels exist.\\

\begin{proof}
(i) It immediately is verified by the fact that the quantity $R_{\max}(\rho\parallel \sigma)$ is finite if and only if $\text{supp}\rho\subseteq \text{supp}\sigma$ \cite{Datta09}.

(ii) It follows from the fact that $R_{\max}(\lambda\mathcal{E}\parallel \mu\mathcal{M})=\lambda\mu^{-1}R_{\max}(\mathcal{E}\parallel \mathcal{M})$ which can be easily verified by the definition of $R_{\max}$.

(iii) If either $\Omega_{\mathbb{O}}(\mathcal{E})$ or $\Omega_{\mathbb{O}}(\mathcal{F})$ is infinite, then the relation is trivial, so assume otherwise.  Let $\mathcal{M}$ be an optimal channel such that $\mathcal{J_E}\leq \lambda \mathcal{J_M}$ and $\mathcal{J_M}\leq \mu \mathcal{J_E}$ with $\lambda\mu=\Omega_{\mathbb{O}}(\mathcal{E})$, and analogously let $\mathcal{M}'$ be an optimal channel such that $\mathcal{J_F}\leq \lambda' \mathcal{J_{M'}}$ and $\mathcal{J_{M'}}\leq \mu' \mathcal{J_F}$ with $\lambda'\mu'=\Omega_{\mathbb{O}}(\mathcal{F})$, then
\begin{equation}
\begin{aligned}
t \mathcal{J_E}+(1-t) \mathcal{J_F} &\leq t \lambda \mathcal{J_M}+(1-t) \lambda' \mathcal{J_{M'}}\\
&=[t \lambda+(1-t) \lambda'] \frac{t \lambda \mathcal{J_M}+(1-t) \lambda' \mathcal{J_{M'}}}{t \lambda+(1-t) \lambda^{\prime}},
\end{aligned}
\end{equation}
where $\frac{t \lambda \mathcal{J_M}+(1-t) \lambda' \mathcal{J_{M'}}}{t \lambda+(1-t) \lambda^{\prime}}\in \mathbb{O}^J$ by convexity of $\mathbb{O}$. Then
\begin{equation}
\begin{aligned}
\frac{t \lambda \mathcal{J_M}+(1-t) \lambda' \mathcal{J_{M'}}}{t \lambda+(1-t) \lambda^{\prime}}&\leq \frac{t \lambda \mu\mathcal{J_E}+(1-t) \lambda' \mu'\mathcal{J_{F}}}{t \lambda+(1-t) \lambda^{\prime}}\\
&\leq \frac{\max\{\lambda\mu,\lambda'\mu'\}[t\mathcal{J_E}+(1-t) \mathcal{J_{F}}]}{t \lambda+(1-t) \lambda^{\prime}}.
\end{aligned}
\end{equation}
It is easy to see that this Choi matrices is a feasible solution for $\Omega_{\mathbb{O}}(t\mathcal{E}+(1-t)\mathcal{F})$, then
\begin{equation}
\begin{aligned}
\Omega_{\mathbb{O}}(t\mathcal{E}+(1-t)\mathcal{F})&\leq [t\lambda+(1-t)\lambda']\frac{\max\{\lambda\mu,\lambda'\mu'\}}{t \lambda+(1-t) \lambda^{\prime}}\\
&=\max\{\lambda\mu,\lambda'\mu'\}.
\end{aligned}
\end{equation}
It is shown that Eq.(10) holds.

(iv) If there is no $\mathcal{M}\in \mathbb{O}$ such that $\mathrm{supp} \mathcal{J_E}=\mathrm{supp} \mathcal{J_M}$, then $\Omega_{\mathbb{O}}(\mathcal{E})=\infty$ and the result is trivial, so we shall assume
otherwise. Then let $\mathcal{M}\in \mathbb{O}$ be any channel such that $\mathcal{J_E}\leq \lambda \mathcal{J_M}$ and $\mathcal{J_M}\leq \mu \mathcal{J_E}$ with $\lambda\mu=\Omega_{\mathbb{O}}(\mathcal{E})$. By the definition of $\Theta\in \mathbb{S}$, we have that $\Theta(\mathcal{M})=\mathcal{M}'\in \mathbb{O}$. Since $\Theta$ preserves positivity, it holds that
$\mathcal{J}_{\Theta(\mathcal{E})}\leq\lambda\mathcal{J}_{\Theta(\mathcal{M})}=\lambda\mathcal{J}_{\mathcal{M}'}$ and $\mathcal{J}_{\mathcal{M}'}=\mathcal{J}_{\Theta(\mathcal{M})}\leq \mu \mathcal{J}_{\Theta(\mathcal{E})}$, so $\mathcal{M}'$ is a valid feasible solution for $\Omega_{\mathbb{O}}(\Theta(\mathcal{E}))$, which concludes the proof.

(v) From the positivity of $R_{\max}(\mathcal{E}\parallel \mathcal{M})$ for any channel $\mathcal{E}$ and $\mathcal{M}$, we have
\begin{equation}
\begin{aligned}
\Omega_{\mathbb{O}}(\mathcal{E})&=\min_{\mathcal{M}\in \mathbb{O}}\left[ \inf\{\lambda \mid \mathcal{J_E}\leq \lambda\mathcal{J_M}\}
\inf\{\mu \mid \mathcal{J_M}\leq \mu\mathcal{J_E}\}\right]\\
&=\inf\{\lambda\mu \mid \mathcal{J_E}\leq \lambda\mathcal{J_M}, \mathcal{J_M}\leq \mu\mathcal{J_E},\mathcal{M}\in \mathbb{O}\}.
\end{aligned}
\end{equation}
Observe that any feasible solution to the problem
\begin{equation}
\inf\{\gamma \mid \mathcal{J_E}\leq \mathcal{J}_{\widetilde{\mathcal{M}}}\leq \gamma\mathcal{J_E}, \mathcal{J}_{\widetilde{\mathcal{M}}}\in cone(\mathbb{O}^J)\}
\end{equation}
gives a feasible solution to Eq.(17) as
\begin{equation}
\mathcal{J_M}=\frac{\mathcal{J}_{\widetilde{\mathcal{M}}}}{\langle \mathbb{I},\mathcal{J}_{\widetilde{\mathcal{M}}}\rangle}, \lambda=\langle \mathbb{I},\mathcal{J}_{\widetilde{\mathcal{M}}}\rangle, \mu=\frac{\gamma}{\langle \mathbb{I},\mathcal{J}_{\widetilde{\mathcal{M}}}\rangle}
\end{equation}
with objective function value $\lambda\mu=\gamma$. Conversely, any feasible solution $\{\mathcal{J_M}, \lambda, \mu\}$ to Eq.(17) gives a feasible solution to Eq.(18) as $\mathcal{J}_{\widetilde{\mathcal{M}}}=\lambda\mathcal{J}_{\mathcal{M}}, \gamma=\lambda\mu$. Thus, the two problems are equivalent.\\
Writing the Lagrangian as
\begin{equation}
\begin{aligned}
&L(\gamma,\mathcal{J}_{\widetilde{\mathcal{M}}};A,B,C)\\
&\quad=\gamma-\langle A, \mathcal{J}_{\widetilde{\mathcal{M}}}-\mathcal{J_E}\rangle-\langle B, \gamma\mathcal{J_E}-\mathcal{J}_{\widetilde{\mathcal{M}}}\rangle-\langle C,\mathcal{J}_{\widetilde{\mathcal{M}}}\rangle\\
&\quad=\gamma(1-\langle B,\mathcal{J_E}\rangle)+\langle B-A-C, \mathcal{J}_{\widetilde{\mathcal{M}}}\rangle+\langle A,\mathcal{J_E}\rangle.
\end{aligned}
\end{equation}
Optimizing over the Lagrange multipliers $A, B\geq 0$ and $C\in \operatorname{cone}(\mathbb{O}^J)^*$, the corresponding dual form of primal problem Eq.(18) is written as
\begin{equation}
\begin{aligned}
\sup\{\langle A, \mathcal{J_E}\rangle \mid \langle B,\mathcal{J_E}\rangle=1, B-A\in \operatorname{cone}(\mathbb{O}^J)^*, A, B\geq 0\}
\\=\sup\left\{\frac{\langle A, \mathcal{J_E}\rangle}{\langle B,\mathcal{J_E}\rangle} \ \Big| \  \frac{\langle A, \mathcal{J_M}\rangle}{\langle B,\mathcal{J_M}\rangle}\leq 1, \forall \mathcal{J_M}\in \mathbb{O}^J, A, B\geq 0\right\}.
\end{aligned}
\end{equation}
If we take $B=\mathbb{I}$ and $A=\epsilon\mathbb{I}$ for $0<\epsilon<1$, it is obvious that $A$ and $B$ are strictly feasible for the dual. Thus, it follows from Slater's theorem \cite{Boyd04} that strong duality holds, the optimal value of the primal problem Eq.(18) is equal to that of dual problem Eq.(21).\\
The second line of Eq.(21) follows since any feasible solution to this program can be rescaled
as $A\mapsto\frac{A}{\langle B, \mathcal{J_E}\rangle}, B\mapsto\frac{B}{\langle B, \mathcal{J_E}\rangle}$ to give a feasible solution to the dual, and vice versa. Here, we implicitly constrain ourselves to $B$ such that $\langle B, \mathcal{J_E}\rangle\neq 0$ and $\langle B, \mathcal{J_M}\rangle\neq 0, \forall \mathcal{J_M}\in \mathbb{O}^J$; this can always be achieved by taking $B$ as a small multiple of the identity.

(vi) Verifying lower semicontinuity of $\Omega_{\mathbb{O}}$ is equivalent to showing that the sublevel sets
$\left\{\mathcal{E} \mid \Omega_{\mathbb{O}}(\mathcal{E})\leq \gamma\right\}$
are closed for all $\gamma\in \mathbb{R}$ \cite{Rockafellar70}. Consider then a sequence
$\mathcal{E}_{n} \stackrel{n \rightarrow \infty}{\longrightarrow} \mathcal{E}$ such that $\Omega_{\mathbb{O}}\left(\mathcal{E}_{n}\right) \leq \gamma ~\forall n$ for some $\gamma$, where we can take $\gamma\geq1$ to avoid trivial cases.  By (v), this entails that there exists $\mathcal{J}_{\widetilde{\mathcal{M}}_n}\in \operatorname{cone}(\mathbb{O}^J)$ such that $\mathcal{J}_{\mathcal{E}_n}\leq \mathcal{J}_{\widetilde{\mathcal{M}}_n}\leq \gamma\mathcal{J}_{\mathcal{E}_n}$ for each $n$. Since $\{\mathcal{J}_{\widetilde{\mathcal{M}}_n}\}$ forms a bounded sequence, by the Bolzano-Weierstrass theorem we can
assume that it converges as $\{\mathcal{J}_{\widetilde{\mathcal{M}}_n}\}\rightarrow \mathcal{J}_{\widetilde{\mathcal{M}}}$, up to passing to a subsequence. The closedness of $\operatorname{cone}(\mathbb{O}^J)$ ensures that $\mathcal{J}_{\widetilde{\mathcal{M}}}\in \operatorname{cone}(\mathbb{O}^J)$. The convergent sequences $\{\mathcal{J}_{\widetilde{\mathcal{M}}_n}-\mathcal{J}_{\mathcal{E}_n}\}$ and
$\{\gamma\mathcal{J}_{\mathcal{E}_n}-\mathcal{J}_{\widetilde{\mathcal{M}}_n}\}$ then must converge to positive semidefinite operators by the closedness of the positive semidefinite cone. This gives $\mathcal{J}_{\mathcal{E}}\leq \mathcal{J}_{\widetilde{\mathcal{M}}}\leq \gamma\mathcal{J}_{\mathcal{E}}$, showing that the sublevel set of $\Omega_{\mathbb{O}}$ is closed. Since $\gamma$ was arbitrary, the desired property is proved.

(vii) The lower bound is obtained by noting that
\begin{equation}
\begin{aligned}
\Omega_{\mathbb{O}}(\mathcal{E}) &=\min _{\mathcal{M} \in \mathbb{O}} R_{\max }(\mathcal{E} \| \mathcal{M}) R_{\max }(\mathcal{M} \| \mathcal{E}) \\
& \geq\left[\min _{\mathcal{M} \in \mathbb{O}} R_{\max }(\mathcal{E} \| \mathcal{M})\right]\left[\min _{\mathcal{M} \in \mathbb{O}} R_{\max }(\mathcal{M} \| \mathcal{E})\right] \\
&=R_{\mathbb{O}}(\mathcal{E}) W_{\mathbb{O}}(\mathcal{E})^{-1}.
\end{aligned}
\end{equation}
The upper bounds follow by using $\mathcal{M}^*_{R}$ and $\mathcal{M}^*_{W}$ as feasible solutions in the definition of $\Omega_{\mathbb{O}}$.

\end{proof}
\end{theorem}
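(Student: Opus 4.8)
The plan is to dispatch the structural properties (i)--(iv) directly from the behaviour of $R_{\max}$, and then to treat the conic-programming reformulation (v) as the analytic core on which (vi) rests. For (i), I would use that $R_{\max}(\rho\|\sigma)<\infty$ exactly when $\mathrm{supp}\,\rho\subseteq\mathrm{supp}\,\sigma$; since $\Omega_{\mathbb{O}}(\mathcal{E})$ is a minimum over $\mathbb{O}$ of the product $R_{\max}(\mathcal{E}\|\mathcal{M})R_{\max}(\mathcal{M}\|\mathcal{E})$, finiteness forces both factors to be finite for some $\mathcal{M}$, i.e.\ the two supports coincide. For (ii), the homogeneity $R_{\max}(\lambda\mathcal{E}\|\mu\mathcal{M})=(\lambda/\mu)R_{\max}(\mathcal{E}\|\mathcal{M})$ makes the scaling $k$ cancel between the two $R_{\max}$ terms. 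For (iii), I would take optimal free channels $\mathcal{M},\mathcal{M}'$ witnessing $\Omega_{\mathbb{O}}(\mathcal{E})=\lambda\mu$ and $\Omega_{\mathbb{O}}(\mathcal{F})=\lambda'\mu'$, form the free-channel Choi matrix $J_{\mathcal{N}}=(t\lambda J_{\mathcal{M}}+(1-t)\lambda'J_{\mathcal{M}'})/(t\lambda+(1-t)\lambda')\in\mathbb{O}^J$, and check that $J_{\mathcal{N}}$ sandwiches $tJ_{\mathcal{E}}+(1-t)J_{\mathcal{F}}$ with two scale factors whose product is $\max\{\lambda\mu,\lambda'\mu'\}$. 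For (iv), feasibility is preserved: a free superchannel $\Theta$ sends $\mathcal{M}\in\mathbb{O}$ to $\Theta(\mathcal{M})\in\mathbb{O}$ and, being positivity-preserving, keeps both Choi inequalities intact, so $\Theta(\mathcal{M})$ is feasible for $\Omega_{\mathbb{O}}(\Theta(\mathcal{E}))$ with the same value.

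The substantive step is (v). First I would rewrite the definition, using positivity of $R_{\max}$ to pull the product inside a single infimum over $\{\lambda\mu\mid J_{\mathcal{E}}\le\lambda J_{\mathcal{M}},\,J_{\mathcal{M}}\le\mu J_{\mathcal{E}},\,\mathcal{M}\in\mathbb{O}\}$. The key change of variables absorbs the scaling into the cone: setting $J_{\widetilde{\mathcal{M}}}=\lambda J_{\mathcal{M}}\in\mathrm{cone}(\mathbb{O}^J)$ and $\gamma=\lambda\mu$ turns this into $\inf\{\gamma\mid J_{\mathcal{E}}\le J_{\widetilde{\mathcal{M}}}\le\gamma J_{\mathcal{E}},\,J_{\widetilde{\mathcal{M}}}\in\mathrm{cone}(\mathbb{O}^J)\}$, and I would verify this correspondence is a bijection on feasible points by inverting it via $J_{\mathcal{M}}=J_{\widetilde{\mathcal{M}}}/\langle\mathbb{I},J_{\widetilde{\mathcal{M}}}\rangle$. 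With a linear objective and conic constraints this is a conic linear program; I would form the Lagrangian with multipliers $A,B\ge0$ for the two sandwich inequalities and $C\in\mathrm{cone}(\mathbb{O}^J)^*$ for cone membership, collect the coefficient of $J_{\widetilde{\mathcal{M}}}$ to obtain $B-A-C=0$ and of $\gamma$ to obtain $\langle B,J_{\mathcal{E}}\rangle=1$, yielding the dual $\sup\{\langle A,J_{\mathcal{E}}\rangle\mid\langle B,J_{\mathcal{E}}\rangle=1,\,B-A\in\mathrm{cone}(\mathbb{O}^J)^*,\,A,B\ge0\}$. Strong duality I would secure from Slater's condition, exhibiting the strictly feasible point $B=\mathbb{I}$, $A=\epsilon\mathbb{I}$ with $0<\epsilon<1$; the ratio form then follows by the rescaling $A,B\mapsto A/\langle B,J_{\mathcal{E}}\rangle,\,B/\langle B,J_{\mathcal{E}}\rangle$.

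Properties (vi) and (vii) come cheaply afterwards. For lower semicontinuity I would show the sublevel sets $\{\mathcal{E}\mid\Omega_{\mathbb{O}}(\mathcal{E})\le\gamma\}$ are closed: given $\mathcal{E}_n\to\mathcal{E}$ with $\Omega_{\mathbb{O}}(\mathcal{E}_n)\le\gamma$, part (v) supplies witnesses $J_{\widetilde{\mathcal{M}}_n}\in\mathrm{cone}(\mathbb{O}^J)$ with $J_{\mathcal{E}_n}\le J_{\widetilde{\mathcal{M}}_n}\le\gamma J_{\mathcal{E}_n}$; boundedness and Bolzano--Weierstrass give a convergent subsequence whose limit lies in $\mathrm{cone}(\mathbb{O}^J)$ by closedness of the cone, and closedness of the PSD cone passes the two inequalities to the limit, yielding $\Omega_{\mathbb{O}}(\mathcal{E})\le\gamma$. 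For (vii) the lower bound is the elementary fact that a minimum of a product dominates the product of the separate minima, which are exactly $R_{\mathbb{O}}(\mathcal{E})$ and $W_{\mathbb{O}}(\mathcal{E})^{-1}$ by Eq.(5); the two upper bounds follow by inserting the optimal channels $\mathcal{M}^*_R$ and $\mathcal{M}^*_W$ as feasible (non-optimal) choices in the product defining $\Omega_{\mathbb{O}}$.

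I expect the main obstacle to be the rigorous justification of strong duality in (v): one must confirm that the problem is genuinely a conic linear program over a closed convex cone, that Slater's strictly-feasible point is valid for the \emph{dual} (not merely the primal), and that the implicit nondegeneracy requirements $\langle B,J_{\mathcal{E}}\rangle\ne0$ and $\langle B,J_{\mathcal{M}}\rangle\ne0$ needed to pass to the ratio form can always be arranged by nudging $B$ toward a small multiple of the identity without changing the optimum. The limiting argument in (vi) is then routine once (v) is in hand.
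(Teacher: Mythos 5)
Your proposal is correct and follows essentially the same route as the paper's own proof at every step: the support/homogeneity arguments for (i)--(ii), the convex combination of rescaled Choi matrices for (iii), feasibility preservation under free superchannels for (iv), the change of variables $\mathcal{J}_{\widetilde{\mathcal{M}}}=\lambda\mathcal{J}_{\mathcal{M}}$ plus Lagrangian duality with the Slater point $B=\mathbb{I}$, $A=\epsilon\mathbb{I}$ for (v), closed sublevel sets via Bolzano--Weierstrass for (vi), and the product-of-minima bound with feasible suboptimal channels for (vii). The caveats you flag about the dual Slater point and the nondegeneracy of $\langle B,\mathcal{J}_{\mathcal{E}}\rangle$ are handled in the paper exactly as you anticipate, by taking $B$ a small multiple of the identity.
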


\begin{proposition}
 For any replacement channel $\mathcal{R_\omega}: \mathbb{L}(A)\rightarrow \mathbb{L}(B)$ defined as $R_\omega(\cdot)=\operatorname{Tr}(\cdot)\omega$ with some fixed $\omega\in \mathbb{D}(B)$, it holds that
\begin{equation}
\begin{aligned}
\Omega_{\mathbb{O}}(\mathcal{R_\omega})\geq \Omega_{\mathbb{F}}(\mathcal{\omega}).
\end{aligned}
\end{equation}
If the class of operations $\mathbb{O}$ contains all replacement channels $\mathcal{R_\sigma}$ with $\sigma\in \mathbb{F}$, then equality holds in the above.
\end{proposition}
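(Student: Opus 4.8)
The plan is to exploit the fact that a replacement channel has a product Choi matrix. First I would record that $J_{\mathcal{R}_\omega}=\mathrm{id}\otimes\mathcal{R}_\omega(\Phi^+)=\mathbb{I}_R\otimes\omega$, which is immediate from $\mathcal{R}_\omega(|i\rangle\langle j|)=\delta_{ij}\,\omega$. This factorized form, together with the partial-trace identity $\mathrm{Tr}_R(J_{\mathcal{M}})=\mathcal{M}(\mathbb{I}_A)$ valid for any channel $\mathcal{M}$, is the bridge between the channel optimization in Eq.(9) and the state optimization in Eqs.(6)--(7).

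For the lower bound I would assume $\Omega_{\mathbb{O}}(\mathcal{R}_\omega)<\infty$ (the claim being trivial otherwise) and use the compactness noted in property (v) to select an optimal free channel $\mathcal{M}\in\mathbb{O}$ with $J_{\mathcal{R}_\omega}\leq\lambda J_{\mathcal{M}}$, $J_{\mathcal{M}}\leq\mu J_{\mathcal{R}_\omega}$ and $\lambda\mu=\Omega_{\mathbb{O}}(\mathcal{R}_\omega)$. Applying the partial trace $\mathrm{Tr}_R$, which is positive and hence preserves the operator ordering, and dividing by $d_A$, I would obtain $\omega\leq\lambda\,\sigma_{\mathcal{M}}$ and $\sigma_{\mathcal{M}}\leq\mu\,\omega$ for the output state $\sigma_{\mathcal{M}}:=\mathcal{M}(\mathbb{I}_A/d_A)$. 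These two inequalities say exactly that $R_{\max}(\omega\parallel\sigma_{\mathcal{M}})\leq\lambda$ and $R_{\max}(\sigma_{\mathcal{M}}\parallel\omega)\leq\mu$, so that $\sigma_{\mathcal{M}}$ is feasible in Eq.(7) and $\Omega_{\mathbb{F}}(\omega)\leq\lambda\mu=\Omega_{\mathbb{O}}(\mathcal{R}_\omega)$.

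For the equality I would run this backwards using the hypothesis $\{\mathcal{R}_\sigma\mid\sigma\in\mathbb{F}\}\subseteq\mathbb{O}$. Choosing an optimal free state $\sigma\in\mathbb{F}$ with $\omega\leq\lambda\sigma$, $\sigma\leq\mu\omega$ and $\lambda\mu=\Omega_{\mathbb{F}}(\omega)$, I tensor each inequality with the positive operator $\mathbb{I}_R$; this preserves the ordering and gives $J_{\mathcal{R}_\omega}=\mathbb{I}_R\otimes\omega\leq\lambda\,\mathbb{I}_R\otimes\sigma=\lambda J_{\mathcal{R}_\sigma}$ and $J_{\mathcal{R}_\sigma}\leq\mu J_{\mathcal{R}_\omega}$. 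Since $\mathcal{R}_\sigma\in\mathbb{O}$ is now an admissible competitor in Eq.(9), it yields $\Omega_{\mathbb{O}}(\mathcal{R}_\omega)\leq R_{\max}(J_{\mathcal{R}_\omega}\parallel J_{\mathcal{R}_\sigma})R_{\max}(J_{\mathcal{R}_\sigma}\parallel J_{\mathcal{R}_\omega})\leq\lambda\mu=\Omega_{\mathbb{F}}(\omega)$, and combined with the lower bound this forces equality.

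The step I expect to be the true obstacle is justifying $\sigma_{\mathcal{M}}\in\mathbb{F}$ in the lower bound: this is precisely the compatibility requirement that a free channel sends the maximally mixed input to a free state. Without such a link between $\mathbb{O}$ and $\mathbb{F}$ the lower bound can fail, since if $\mathcal{R}_\omega\in\mathbb{O}$ while $\omega\notin\mathbb{F}$ one gets $\Omega_{\mathbb{O}}(\mathcal{R}_\omega)=1<\Omega_{\mathbb{F}}(\omega)$; I would therefore make this consistency assumption explicit. By contrast, the equality direction is purely algebraic once the replacement channels are known to be free, since tensoring with $\mathbb{I}_R$ and the scale-invariance from property (ii) do all the work.
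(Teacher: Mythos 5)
Your proof is correct, but your lower-bound argument takes a genuinely different route from the paper's. You work in the primal picture: you take an optimal free channel $\mathcal{M}$ achieving $\Omega_{\mathbb{O}}(\mathcal{R}_\omega)$, apply $\operatorname{Tr}_R$ to the Choi-matrix inequalities, and land on the state $\sigma_{\mathcal{M}}=\mathcal{M}(\mathbb{I}_A/d_A)$ as a feasible point for $\Omega_{\mathbb{F}}(\omega)$. The paper instead works in the dual picture of Theorem~1(v): it lifts any dual feasible pair $(A,B)$ for $\Omega_{\mathbb{F}}(\omega)$ to the pair $(\tau^{T}\otimes A,\tau^{T}\otimes B)$ with $\tau\in\mathbb{F}$, checks via $\langle\tau^{T}\otimes A,\mathcal{J_M}\rangle=\langle A,\mathcal{M}(\tau)\rangle$ that this is dual feasible for $\Omega_{\mathbb{O}}(\mathcal{R}_\omega)$, and concludes from the supremum characterization. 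The two arguments are dual to one another and of comparable difficulty; yours is arguably more elementary (no appeal to the conic dual or to strong duality), while the paper's avoids having to invoke attainment of the optimum in the channel problem. The compatibility issue you flag is real and is not a defect specific to your route: the paper's step $\langle A,\mathcal{M}(\tau)\rangle/\langle B,\mathcal{M}(\tau)\rangle\leq 1$ silently uses that $\mathcal{M}(\tau)\in\mathbb{F}$ for free $\mathcal{M}$ and free $\tau$, which is exactly the same kind of resource-non-generating link between $\mathbb{O}$ and $\mathbb{F}$ that you need for $\sigma_{\mathcal{M}}\in\mathbb{F}$ (you need it only for the single input $\mathbb{I}_A/d_A$, the paper for a single free input $\tau$); your explicit counterexample with $\mathcal{R}_\omega\in\mathbb{O}$, $\omega\notin\mathbb{F}$ correctly shows some such hypothesis is unavoidable. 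Your equality direction coincides with the paper's: both tensor the optimal state inequalities with $\mathbb{I}_R$ and use $\mathcal{R}_\sigma\in\mathbb{O}$ as a feasible competitor.
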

\begin{proof}
Taking any dual feasible solution $A,B$ for $\Omega_{\mathbb{F}}(\omega)$ and any $\tau\in \mathbb{F}$, we can see that $\langle \tau^T\otimes A, \mathcal{J_M}\rangle=\langle A,\mathcal{M}(\tau)\rangle$, $\langle \tau^T\otimes B, \mathcal{J_M}\rangle=\langle B,\mathcal{M}(\tau)\rangle$. Since
\begin{equation}
\begin{aligned}
\frac{\langle \tau^T\otimes A, \mathcal{J_M}\rangle}{\langle \tau^T\otimes B, \mathcal{J_M}\rangle}=\frac{\langle A,\mathcal{M}(\tau)\rangle}{\langle B,\mathcal{M}(\tau)\rangle}\leq 1
\end{aligned}
\end{equation}
for any $\mathcal{M}\in \mathbb{O}$ using the Choi-Jamio${\l}$kowski isomorphism, it implies that $\tau^T\otimes A$ and $\tau^T\otimes B$ is dual feasible to $\Omega_{\mathbb{O}}(\mathcal{R_\omega})$. This immediately gives that
\begin{equation}
\begin{aligned}
\Omega_{\mathbb{O}}(\mathcal{R_\omega})&\geq \sup\left\{\frac{\langle \tau^T\otimes A, \mathcal{J_{\mathcal{R_\omega}}}\rangle}{\langle \tau^T\otimes B, \mathcal{J_{\mathcal{R_\omega}}}\rangle} \ \Big| \  \frac{\langle A,\sigma\rangle}{\langle B,\sigma\rangle}\leq 1, \forall \sigma\in \mathbb{F}, A,B\geq 0\right\}\\
&=\sup\left\{\frac{\langle A, \omega\rangle}{\langle B, \omega \rangle} \ \Big| \  \frac{\langle A,\sigma\rangle}{\langle B,\sigma\rangle}\leq 1, \forall \sigma\in \mathbb{F}, A,B\geq 0\right\}\\
&=\Omega_{\mathbb{F}}(\omega).
\end{aligned}
\end{equation}
Now, assume that $\mathcal{R}_{\sigma}\in \mathbb{O}$ for any $\sigma\in \mathbb{F}$. From the definition of the projective robustness of states, let $\sigma\in\mathbb{F}$ be an optimal state such that $\omega\leq \lambda\sigma$ and $\sigma\leq\mu \omega$ with $\lambda\mu=\Omega_{\mathbb{F}}(\omega)$. Moreover, $\mathbb{I}\otimes\omega\leq\mathbb{I}\otimes(\lambda\sigma)=\lambda\mathcal{J_{R_\sigma}}$ and $\mathbb{I}\otimes\sigma\leq\mathbb{I}\otimes(\mu\omega)=\mu\mathcal{J_{R_\omega}}$, which imply $\Omega_{\mathbb{O}}(\mathcal{R_\omega})\leq\Omega_{\mathbb{F}}(\omega)$.
\end{proof}
The above proposition explicitly shows the relation between the projective robustness of states and its channel version.  As a consequence of part (iv) of Theorem 1 and Proposition 1, we have the following corollary.

\begin{corollary}
Let $\mathbb{O}$ be any class of free operations, which satisfies $\mathcal{R}_\sigma\in \mathbb{O} \ \forall \sigma\in\mathbb{F}$ . Let $\mathcal{N}:\mathbb{L}(A)\rightarrow \mathbb{L}(B)$ be any channel such that:

(i) there exists a free superchannel $\Gamma\in \mathbb{S}$ and a state $\omega$ such that $\Gamma(\mathcal{R}_\omega)=\mathcal{N}$,

(ii) there exists a free superchannel $\Theta\in \mathbb{S}$ such that $\Theta(\mathcal{N})=\mathcal{R}_\omega$.\\
Then
\begin{equation}
\Omega_{\mathbb{O}}(\mathcal{N})=\Omega_{\mathbb{F}}(\omega).
\end{equation}
\end{corollary}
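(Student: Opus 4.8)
The plan is to establish the claimed equality by a two-sided sandwich, driven entirely by the monotonicity of $\Omega_{\mathbb{O}}$ under free superchannels (Theorem 1(iv)) together with the exact evaluation of the projective robustness of a replacement channel furnished by Proposition 1. The hypotheses are arranged so that $\mathcal{N}$ and $\mathcal{R}_\omega$ are interconvertible by free superchannels, and since $\Omega_{\mathbb{O}}$ can only decrease under such maps, interconvertibility forces their values to agree.

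First I would record the key identity that anchors the argument. Because the corollary assumes $\mathcal{R}_\sigma\in\mathbb{O}$ for every $\sigma\in\mathbb{F}$, the equality case of Proposition 1 is in force for $\mathcal{R}_\omega$, so $\Omega_{\mathbb{O}}(\mathcal{R}_\omega)=\Omega_{\mathbb{F}}(\omega)$. This is what converts the channel quantity attached to the replacement channel into the state quantity $\Omega_{\mathbb{F}}(\omega)$ appearing on the right-hand side of the claim; without the containment assumption only the inequality $\Omega_{\mathbb{O}}(\mathcal{R}_\omega)\geq\Omega_{\mathbb{F}}(\omega)$ would be available.

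Next I would invoke the two transformation hypotheses to bound $\Omega_{\mathbb{O}}(\mathcal{N})$ from both sides. From (i), substituting $\mathcal{N}=\Gamma(\mathcal{R}_\omega)$ with $\Gamma\in\mathbb{S}$ into Theorem 1(iv) gives $\Omega_{\mathbb{O}}(\mathcal{N})\leq\Omega_{\mathbb{O}}(\mathcal{R}_\omega)=\Omega_{\mathbb{F}}(\omega)$. From (ii), substituting $\mathcal{R}_\omega=\Theta(\mathcal{N})$ with $\Theta\in\mathbb{S}$ into Theorem 1(iv) gives $\Omega_{\mathbb{F}}(\omega)=\Omega_{\mathbb{O}}(\mathcal{R}_\omega)\leq\Omega_{\mathbb{O}}(\mathcal{N})$. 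Chaining these two inequalities pins $\Omega_{\mathbb{O}}(\mathcal{N})$ between $\Omega_{\mathbb{F}}(\omega)$ and itself, yielding the asserted equality.

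I do not expect a genuine obstacle: the whole content is carried by the two cited results, and the proof is essentially a one-line chain of inequalities once the ingredients are in place. The only points requiring care are bookkeeping matters, namely checking that $\Gamma$ and $\Theta$ really are free superchannels in $\mathbb{S}$ so that monotonicity is legitimately applicable in both directions, and confirming that the containment hypothesis on $\mathbb{O}$ is precisely what upgrades Proposition 1 to the equality $\Omega_{\mathbb{O}}(\mathcal{R}_\omega)=\Omega_{\mathbb{F}}(\omega)$ rather than the bare inequality.
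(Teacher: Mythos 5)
Your proof is correct and follows exactly the route the paper intends: the paper derives this corollary as an immediate consequence of Theorem 1(iv) (monotonicity under free superchannels, applied in both directions via hypotheses (i) and (ii)) and the equality case of Proposition 1, which is precisely your two-sided sandwich. No gaps.
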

By the projective robustness of channels, Corollary 1 again claims a universal property given in Ref.\cite{Regula and Takagi21}, which states that if some channels can be reversibly interconverted with state resources through free superchannels, then we regard these two types of resources as equivalent.

In this section, we study the properties of the projective robustness of channels in detail, which will lay a good foundation for further research to manipulate channel resources.

\subsection{Channel resource distillation}

Resource distillation is one of the most important operational tasks in resource theory. The dynamical resource distillation can be generally understood as converting a noisy resource channel $\mathcal{E}: \mathbb{L}(A) \rightarrow \mathbb{L}(B)$ into some target channels $\mathcal{T}: \mathbb{L}(C) \rightarrow \mathbb{L}(D)$, which are viewed as “pure” or “perfect” resources.  Note that we choose 
some resourceful channels converting any pure state into the pure state as target channel $\mathcal{T}$.
Our purpose is to understand when the transformation $\Theta(\mathcal{E})=\mathcal{T}$ can be achieved by using the free superchannel $\Theta \in \mathbb{S}$. 

However, the practical physical system cannot perfectly transform the channels, i.e., the manipulation of channels always leads to the possibility of error of the transformation. Thus, the core task of resource distillation is to provide a threshold for such error to characterize capabilities and limitations of distillation protocols manipulating quantum
resources. To accomplish this task, we start by introducing the worst-case fidelity between the two channels $\mathcal{E}, \mathcal{F}: \mathbb{L}(A) \rightarrow \mathbb{L}(B)$, which is defined as follows \cite{Belavkin 05,Gilchrist 05} 
\begin{equation}
\begin{aligned}
F(\mathcal{E}, \mathcal{F}): &=\min _{\rho_{R A}} F(\mathrm{id} \otimes \mathcal{E}(\rho), \mathrm{id} \otimes \mathcal{F}(\rho)) \\
&=\min _{\psi_{R A}} F(\mathrm{id} \otimes \mathcal{E}(\psi), \mathrm{id} \otimes \mathcal{F}(\psi))
\end{aligned}
\end{equation}
where $F(\rho, \sigma)=\|\sqrt{\rho} \sqrt{\sigma}\|_{1}^{2}$ is the fidelity, and in the second line the optimization is over all pure input states. Thus, our goal is to achieve a free superchannel satisfying $F(\Theta(\mathcal{E}), \mathcal{T}) \geq 1-\varepsilon$ for some small error $\varepsilon>0$. To quantify how closely the target channel is to a free channel, the fidelity-based measure of the overlap with free channels is defined as \cite{Regula and Takagi21}
\begin{equation}
\begin{aligned}
F_{\mathbb{O}}(\mathcal{T})&=\max _{\mathcal{M} \in \mathbb{O}} F(\mathcal{T}, \mathcal{M})\\
&= \max _{\mathcal{M} \in \mathbb{O}}\min _{\psi_{R A}}F(\mathrm{id} \otimes \mathcal{T}(\psi), \mathrm{id} \otimes \mathcal{M}(\psi))\\
&= \min _{\psi_{R A}}\max _{\mathcal{M} \in \mathbb{O}}\langle\mathrm{id} \otimes \mathcal{T}(\psi), \mathrm{id} \otimes \mathcal{M}(\psi)\rangle.
\end{aligned}
\end{equation}
In the state case, we can write
\begin{equation}
\begin{aligned}
F_{\mathbb{F}}(\phi)&=\max _{\sigma \in \mathbb{F}} F(\phi, \sigma)=\max _{\sigma \in \mathbb{F}} \langle \phi, \sigma\rangle.
\end{aligned}
\end{equation}

The following result builds a bound on distillation error in channel deterministic transformations.
\begin{theorem}
Let $\mathcal{T}\in \operatorname{CPTP}$ and suppose that $id\otimes \mathcal{T}(\psi)$ is pure for any pure state $\psi$. If there exists a free superchannel $\Theta\in \mathbb{S}$ such that $\Theta(\mathcal{E})=\mathcal{N}$ with $F(\mathcal{N},\mathcal{T})\geq1-\varepsilon$, then
\begin{equation}\label{distillerror1}
\varepsilon\geq\left(\frac{F_{\mathbb{O}}(\mathcal{T})}{1-F_{\mathbb{O}}(\mathcal{T})}\Omega_{\mathbb{O}}(\mathcal{E})+1\right)^{-1}.
\end{equation}
\end{theorem}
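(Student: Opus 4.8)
The plan is to combine the monotonicity of $\Omega_{\mathbb{O}}$ under free superchannels with the ratio (dual) formulation of $\Omega_{\mathbb{O}}$ furnished by Theorem 1(v). Since $\mathcal{N}=\Theta(\mathcal{E})$ and Theorem 1(iv) gives $\Omega_{\mathbb{O}}(\mathcal{N})\leq\Omega_{\mathbb{O}}(\mathcal{E})$, while the right-hand side of the claimed bound is a decreasing function of $\Omega_{\mathbb{O}}$, it suffices to prove the inequality with $\mathcal{N}$ in place of $\mathcal{E}$ and then apply monotonicity at the very end. Writing $F:=F_{\mathbb{O}}(\mathcal{T})$ and rearranging, the target reduces to the algebraically equivalent statement $\Omega_{\mathbb{O}}(\mathcal{N})\geq\frac{1-F}{F}\cdot\frac{1-\varepsilon}{\varepsilon}$, which I would obtain by exhibiting one explicit feasible pair $(A,B)$ in the dual supremum of Theorem 1(v).

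To build the witness I would fix an input state $\psi_0$ attaining the minimax in the third line of the definition of $F_{\mathbb{O}}(\mathcal{T})$, so that $\langle\mathrm{id}\otimes\mathcal{T}(\psi_0),\mathrm{id}\otimes\mathcal{M}(\psi_0)\rangle\leq F$ for every $\mathcal{M}\in\mathbb{O}$; existence follows from compactness of the pure input states and of $\mathbb{O}$. Set $\phi_0:=\mathrm{id}\otimes\mathcal{T}(\psi_0)$, which by hypothesis is pure, hence a rank-one projector with $\mathbb{I}-\phi_0\geq0$. Introduce the positive linear map $\Lambda_{\psi_0}:X\mapsto\mathrm{id}\otimes\mathcal{N}_X(\psi_0)$ carrying a Choi operator $X$ to the corresponding output, and define $A:=\Lambda_{\psi_0}^{\dagger}(\phi_0)$ and $B:=\frac{F}{1-F}\Lambda_{\psi_0}^{\dagger}(\mathbb{I}-\phi_0)$. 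As $\Lambda_{\psi_0}$ sends positive semidefinite Choi operators to states, its Hilbert--Schmidt adjoint $\Lambda_{\psi_0}^{\dagger}$ is positive, whence $A,B\geq0$. Using trace preservation of $\mathcal{M}$, the dual constraint $\langle A,\mathcal{J_M}\rangle\leq\langle B,\mathcal{J_M}\rangle$ collapses to $\langle\phi_0,\mathrm{id}\otimes\mathcal{M}(\psi_0)\rangle\leq F$, which is exactly the defining property of $\psi_0$, so $(A,B)$ is feasible.

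Evaluating the objective at $\mathcal{J_N}$ then gives $\frac{\langle A,\mathcal{J_N}\rangle}{\langle B,\mathcal{J_N}\rangle}=\frac{1-F}{F}\cdot\frac{f}{1-f}$ with $f:=\langle\phi_0,\mathrm{id}\otimes\mathcal{N}(\psi_0)\rangle$. Because the worst-case fidelity is a minimum over inputs, $f\geq F(\mathcal{N},\mathcal{T})\geq1-\varepsilon$, and since $t\mapsto t/(1-t)$ is increasing this yields the desired lower bound on $\Omega_{\mathbb{O}}(\mathcal{N})$; rearranging and invoking $\Omega_{\mathbb{O}}(\mathcal{N})\leq\Omega_{\mathbb{O}}(\mathcal{E})$ finishes the proof. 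I expect the main obstacle to be precisely the coupling of two distinct optimizations: the worst-case channel fidelity is minimized at one input, whereas $F_{\mathbb{O}}(\mathcal{T})$ is minimized at another. The resolution is to construct the witness from the $F_{\mathbb{O}}(\mathcal{T})$-optimal input $\psi_0$, so that the free-channel constraint holds \emph{exactly}, while the min-over-inputs definition only weakens the numerator fidelity in the favorable direction $f\geq1-\varepsilon$. Minor points to verify along the way are the positivity of $\Lambda_{\psi_0}^{\dagger}$ and that $F<1$ (implicit in the bound), with the degenerate case $f=1$ forcing $\Omega_{\mathbb{O}}(\mathcal{N})=\infty$ and rendering the inequality trivial.
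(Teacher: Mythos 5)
Your proof is correct, but it reaches the bound by a different mechanism than the paper. The paper works entirely in the primal: it takes the optimal free channel $\mathcal{M}$ with $\mathcal{J}_{\mathcal{N}}\leq\lambda\mathcal{J}_{\mathcal{M}}$ and $\mathcal{J}_{\mathcal{M}}\leq\mu\mathcal{J}_{\mathcal{N}}$, sandwiches the overlaps $\langle\mathrm{id}\otimes\mathcal{N}(\psi),\mathrm{id}\otimes\mathcal{T}(\psi)\rangle$ and $\langle\mathrm{id}\otimes\mathcal{M}(\psi),\mathbb{I}-\mathrm{id}\otimes\mathcal{T}(\psi)\rangle$ for arbitrary $\psi$ to obtain $1-\varepsilon\leq\lambda F_{\mathbb{O}}(\mathcal{T})$ and $1-F_{\mathbb{O}}(\mathcal{T})\leq\mu\varepsilon$, and multiplies; monotonicity is invoked at the end, exactly as you do. You instead exhibit an explicit dual witness $A=\Lambda_{\psi_0}^{\dagger}(\phi_0)$, $B=\tfrac{F}{1-F}\Lambda_{\psi_0}^{\dagger}(\mathbb{I}-\phi_0)$ built from the $F_{\mathbb{O}}(\mathcal{T})$-optimal input, check feasibility via $\langle\phi_0,\mathrm{id}\otimes\mathcal{M}(\psi_0)\rangle\leq F$, and read off the bound from weak duality in Theorem 1(v). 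The underlying inequalities ($f\geq1-\varepsilon$ at the numerator, $g\leq F$ for free channels) are the same, but your route depends on the dual characterization (and on positivity of $\Lambda_{\psi_0}^{\dagger}$ and trace preservation collapsing the constraint), whereas the paper's is more elementary and needs only the definition of $\Omega_{\mathbb{O}}$; in exchange, your construction makes transparent that the error bound is witnessed by a simultaneous discrimination/exclusion pair of the kind appearing in Theorem 5, which is a nice conceptual bonus. Your handling of the degenerate cases ($f=1$ forcing $\Omega_{\mathbb{O}}(\mathcal{N})=\infty$, and $F_{\mathbb{O}}(\mathcal{T})<1$ being implicit) is sound; the only remaining edge case, $F_{\mathbb{O}}(\mathcal{T})=0$, makes $B=0$ and would need a one-line separate treatment, but it is equally glossed over in the paper.
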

\begin{proof}
The relation is trivial when $\Omega_{\mathbb{O}}(\mathcal{E})=\infty$, so assume otherwise. Then $\Omega_{\mathbb{O}}(\mathcal{N})<\infty$, so let $\mathcal{M}\in \mathbb{O}$  be a free channel such that $\mathcal{J}_{\mathcal{N}}\leq \lambda \mathcal{J}_{\mathcal{M}}$ and $\mathcal{J}_{\mathcal{M}}\leq \mu \mathcal{J}_{\mathcal{N}}$ with $\lambda\mu=\Omega_{\mathbb{O}}(\mathcal{N})$. For any $\psi$, we have
\begin{equation}
\begin{aligned}
1-\varepsilon & \leq \langle \mathrm{id}\otimes \mathcal{N}(\psi), \mathrm{id} \otimes \mathcal{T}(\psi)\rangle\\
&\leq \lambda\langle \mathrm{id}\otimes \mathcal{M}(\psi), \mathrm{id} \otimes \mathcal{T}(\psi)\rangle\\
&\leq \lambda F^{\psi}_{\mathbb{O}}(\mathcal{T})
\end{aligned}
\end{equation}
where $F^{\psi}_{\mathbb{O}}(\mathcal{T})=\max_{\mathcal{M}\in\mathbb{O}}\langle \mathrm{id}\otimes \mathcal{M}(\psi), \mathrm{id} \otimes \mathcal{T}(\psi)\rangle$.
From the second and third lines of Eq.(31), we have
\begin{equation}
\begin{aligned}
1-F^{\psi}_{\mathbb{O}}(\mathcal{T}) &\leq 1-\langle \mathrm{id}\otimes \mathcal{M}(\psi), \mathrm{id} \otimes \mathcal{T}(\psi)\rangle\\
&=\left\langle \mathrm{id}\otimes \tilde{\Theta}(\mathcal{M})(\psi), \mathbb{I}-\mathrm{id} \otimes \mathcal{T}(\psi)\right\rangle\\
&\leq \mu \left\langle \mathrm{id}\otimes \mathcal{N}(\psi), \mathbb{I}-\mathrm{id} \otimes \mathcal{T}(\psi)\right\rangle\\
&\leq \mu \varepsilon.
\end{aligned}
\end{equation}
Due to the arbitrariness of $\psi$, for Eqs.(31) and (32), it immediately holds that
\begin{equation}
1-\varepsilon \leq  \lambda F_{\mathbb{O}}(\mathcal{T})
\end{equation}
and
\begin{equation}
1-F_{\mathbb{O}}(\mathcal{T}) \leq \mu \varepsilon.
\end{equation}
From Eqs.(33) and (34),
\begin{equation}
\begin{aligned}
\Omega_{\mathbb{O}}(\mathcal{E})\geq \Omega_{\mathbb{O}}(\mathcal{N})=\lambda\mu \geq \frac{(1-\varepsilon)(1-F_{\mathbb{O}}(\mathcal{T}))}{\varepsilon F_{\mathbb{O}}(\mathcal{T})},
\end{aligned}
\end{equation}
where we use the monotonicity of $\Omega_{\mathbb{O}}$. Rearranging, we can immediately get Eq.(30).
\end{proof}
The bound  above provides a general error threshold that cannot be exceeded by any deterministic protocol in any channel resources. 

\begin{remark}
Particularly, it is easy to see that the above theorem still holds if we take $\mathcal{T}$ as unitary channel $\mathcal{U}=U\cdot U^{\dagger}$ or the replacement channel $\mathcal{R}_{\phi}=\text{Tr}(\cdot)\phi$ for some resourceful pure state $\phi$. Moreover, when the input and target channels are the preparation channels $\mathcal{P}_\rho$ and $\mathcal{P}_\phi$, respectively, we have
$\varepsilon\geq\left(\frac{F_{\mathbb{F}}(\phi)}{1-F_{\mathbb{F}}(\phi)}\Omega_{\mathbb{F}}(\rho)+1\right)^{-1}$, which is consistent with the result of deterministic state resource distillation in Ref.\cite{Regula 21}.
\end{remark}
Notice that part (iv) of Theorem 1 provides a necessary condition  for  the  deterministic  transformation  of  single-channel. Next, we will provide a necessary condition for the deterministic transformation from multiple channels to single-channel in the most general protocol, i.e, $\Omega_{\mathbb{O}}$ satisfies sub- or super-multiplicativity under any free quantum process.

{\textbf{Sub- or supermultiplicativity.}} To construct the most general protocol to manipulate multiple channels (including parallel, sequential, or adaptive, with or without a definite causal order), the set of free quantum processes was defined as follows \cite{Regula and Takagi21}
\begin{equation}
\mathbb{S}_{(n)}=\{\Upsilon \ \mid \ \Upsilon(\mathcal{M}_1,...,\mathcal{M}_n)\in \mathbb{O}, \forall \mathcal{M}_1,...,\mathcal{M}_n\in \mathbb{O}\},
\end{equation}
where the $n$-channel quantum process $\Upsilon$ is an $n$-linear map which transforms $n$-channels to a single channel and we only require $\Upsilon(\mathcal{N}_1,...,\mathcal{N}_n)\in \text{CPTP}$ for any $\mathcal{N}_1,...,\mathcal{N}_n\in \text{CPTP}$. Obviously, superchannel is a special case of quantum process $\Upsilon$ where ones take a single channel as the input.

\begin{theorem}
Let $(\mathcal{E}_1,...,\mathcal{E}_n)$ be a collection of $n$ channels. For any free protocol $\Upsilon\in \mathbb{S}_{(n)}$ it holds that
\begin{equation}
\Omega_{\mathbb{O}}(\Upsilon(\mathcal{E}_1,...,\mathcal{E}_n))\leq \prod_i\Omega_{\mathbb{O}}(\mathcal{E}_i).
\end{equation}
In particular, let the $n$-tuple $(\mathcal{E},...,\mathcal{E})=:\mathcal{E}^{\times n}$ represent an application of $n$ copies of the same channel, we have $\Omega_{\mathbb{O}}\left(\Upsilon\left(\mathcal{E}^{\times n}\right)\right) \leq \Omega_{\mathbb{O}}(\mathcal{E})^{n}$.
\end{theorem}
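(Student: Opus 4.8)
The plan is to lift the single-channel monotonicity argument of Theorem 1(iv) to the multilinear setting by transporting the two one-sided CP orderings through $\Upsilon$ one channel at a time and collecting the scalars multiplicatively. First I would reduce to the nontrivial case: if some $\Omega_{\mathbb{O}}(\mathcal{E}_i)=\infty$ the bound holds vacuously, so assume all are finite. By the infimum form in Theorem 1(v), for each $i$ I may fix a free channel $\mathcal{M}_i\in\mathbb{O}$ and scalars $\lambda_i,\mu_i\geq1$ with
\begin{equation}
\mathcal{J}_{\mathcal{E}_i}\leq\lambda_i\mathcal{J}_{\mathcal{M}_i},\qquad \mathcal{J}_{\mathcal{M}_i}\leq\mu_i\mathcal{J}_{\mathcal{E}_i},\qquad \lambda_i\mu_i=\Omega_{\mathbb{O}}(\mathcal{E}_i),
\end{equation}
i.e.\ $\lambda_i\mathcal{M}_i-\mathcal{E}_i$ and $\mu_i\mathcal{E}_i-\mathcal{M}_i$ are CP. (That $\lambda_i,\mu_i\geq1$ follows by taking full traces of the Choi inequalities, using $\operatorname{Tr}_R\mathcal{J}_{\mathcal{E}_i}=\operatorname{Tr}_R\mathcal{J}_{\mathcal{M}_i}=\mathbb{I}$.)

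The heart of the argument is a hybrid (telescoping) replacement in which I swap $\mathcal{E}_i\mapsto\mathcal{M}_i$ in one slot at a time. At the $i$-th step, $n$-linearity of $\Upsilon$ gives
\begin{equation}
\lambda_i\,\Upsilon(\ldots,\mathcal{M}_i,\ldots)-\Upsilon(\ldots,\mathcal{E}_i,\ldots)=\Upsilon(\ldots,\lambda_i\mathcal{M}_i-\mathcal{E}_i,\ldots),
\end{equation}
where the unchanged slots hold channels $\mathcal{M}_{<i}$ and $\mathcal{E}_{>i}$. The key observation is that $\lambda_i\mathcal{M}_i-\mathcal{E}_i=(\lambda_i-1)\mathcal{N}_i$ for a genuine channel $\mathcal{N}_i\in\operatorname{CPTP}$ (its Choi is PSD with partial trace $(\lambda_i-1)\mathbb{I}$; when $\lambda_i=1$ the difference vanishes and the step is trivial). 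Hence the right-hand side equals $(\lambda_i-1)\,\Upsilon(\ldots,\mathcal{N}_i,\ldots)$, a nonnegative multiple of a CPTP map by the defining property of a quantum process, so it is CP. Thus $\Upsilon(\ldots,\mathcal{E}_i,\ldots)\leq\lambda_i\,\Upsilon(\ldots,\mathcal{M}_i,\ldots)$ at the level of Choi matrices. Chaining all $n$ replacements yields $\mathcal{J}_{\mathcal{E}'}\leq(\prod_i\lambda_i)\,\mathcal{J}_{\mathcal{M}'}$, where $\mathcal{E}':=\Upsilon(\mathcal{E}_1,\ldots,\mathcal{E}_n)$ and $\mathcal{M}':=\Upsilon(\mathcal{M}_1,\ldots,\mathcal{M}_n)$; running the identical telescope with the orderings $\mu_i\mathcal{E}_i-\mathcal{M}_i$ gives $\mathcal{J}_{\mathcal{M}'}\leq(\prod_i\mu_i)\,\mathcal{J}_{\mathcal{E}'}$.

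To conclude, I would invoke the defining property of $\mathbb{S}_{(n)}$: since every $\mathcal{M}_i\in\mathbb{O}$, the output $\mathcal{M}'=\Upsilon(\mathcal{M}_1,\ldots,\mathcal{M}_n)$ is itself free. Hence $\mathcal{M}'$ is a feasible free channel for $\Omega_{\mathbb{O}}(\mathcal{E}')$ with sandwiching scalars $\prod_i\lambda_i$ and $\prod_i\mu_i$, so
\begin{equation}
\Omega_{\mathbb{O}}(\mathcal{E}')\leq\Big(\prod_i\lambda_i\Big)\Big(\prod_i\mu_i\Big)=\prod_i\lambda_i\mu_i=\prod_i\Omega_{\mathbb{O}}(\mathcal{E}_i),
\end{equation}
and the special case $\mathcal{E}^{\times n}$ follows by setting each $\mathcal{E}_i=\mathcal{E}$. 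The step I expect to require the most care is the CP-preservation used in the telescope: the hypothesis on $\Upsilon$ only asserts that it maps tuples of CPTP maps to CPTP maps, which does not by itself imply preservation of complete positivity for arbitrary CP inputs, since a general CP map is not a positive rescaling of a channel. The resolution is exactly the identity $\lambda_i\mathcal{M}_i-\mathcal{E}_i=(\lambda_i-1)\mathcal{N}_i$ above, which reduces each hybrid difference to a nonnegative multiple of $\Upsilon$ evaluated on a tuple of channels, so that only the stated CPTP-preserving property of $\Upsilon$ is needed.
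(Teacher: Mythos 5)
Your proof is correct and follows essentially the same route as the paper's: a telescoping replacement of each $\mathcal{E}_i$ by its optimal free witness $\mathcal{M}_i$, followed by the observation that $\mathcal{M}'=\Upsilon(\mathcal{M}_1,\ldots,\mathcal{M}_n)\in\mathbb{O}$ is a feasible solution for $\Omega_{\mathbb{O}}(\Upsilon(\mathcal{E}_1,\ldots,\mathcal{E}_n))$ with scalars $\prod_i\lambda_i$ and $\prod_i\mu_i$. Your explicit justification that each hybrid difference is CP --- writing $\lambda_i\mathcal{M}_i-\mathcal{E}_i=(\lambda_i-1)\mathcal{N}_i$ for a genuine channel $\mathcal{N}_i$ so that only the CPTP-preserving property of $\Upsilon$ is invoked --- is in fact more careful than the paper's, which simply appeals to ``the positivity of $\Upsilon$'' at that step.
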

\begin{proof}
For each $\mathcal{E}_i$, we assume that $\Omega_{\mathbb{O}}(\mathcal{E}_i)$ is finite. Otherwise, the result is trivial.
Let $\mathcal{M}_i\in \mathbb{O}$ be an optimal channel such that $\mathcal{J}_{\mathcal{E}_i}\leq \lambda_i \mathcal{J}_{\mathcal{M}_i}$ and $\mathcal{J}_{\mathcal{M}_i}\leq \mu_i \mathcal{J}_{\mathcal{E}_i}$ with $\lambda_i\mu_i=\Omega_{\mathbb{O}}(\mathcal{E}_i)$. Using the $n$-linearity of
the transformation $\Upsilon$, we have
\begin{equation}
\begin{aligned}
&\Upsilon\left(\mathcal{E}_{1}, \ldots, \mathcal{E}_{n}\right)\\
&=\Upsilon\left(\mathcal{E}_{1}-\lambda_{1} \mathcal{M}_{1}, \mathcal{E}_{2}, \ldots, \mathcal{E}_{n}\right)+\Upsilon\left(\lambda_{1} \mathcal{M}_{1}, \mathcal{E}_{2}, \ldots, \mathcal{E}_{n}\right) \\
&=\Upsilon\left(\mathcal{E}_{1}-\lambda_{1} \mathcal{M}_{1}, \mathcal{E}_{2}, \ldots, \mathcal{E}_{n}\right)+\Upsilon\left(\lambda_{1} \mathcal{M}_{1}, \mathcal{E}_{2}-\lambda_{2} \mathcal{M}_{2}, \mathcal{E}_{3}, \ldots, \mathcal{E}_{n}\right)\\
&+\Upsilon\left(\lambda_{1} \mathcal{M}_{1}, \lambda_{2} \mathcal{M}_{2}, \mathcal{E}_{3}, \ldots, \mathcal{E}_{n}\right) \\
& \vdots \\
&=\Upsilon\left.(\mathcal{E}_{1}-\lambda_{1} \mathcal{M}_{1}, \mathcal{E}_{2}, \ldots, \mathcal{E}_{n})+\ldots+\Upsilon(\lambda_{1} \mathcal{M}_{1}, \ldots, \lambda_{n-1} \mathcal{M}_{n-1},\right.\\
&\left.\mathcal{E}_{n}-\lambda_{n} \mathcal{M}_{n})
+\Upsilon(\lambda_{1} \mathcal{M}_{1}, \ldots, \lambda_{n} \mathcal{M}_{n})\right. \\
\end{aligned}
\end{equation}
and
\begin{equation}
\begin{aligned}
&\Upsilon\left(\mathcal{M}_{1}, \ldots, \mathcal{M}_{n}\right)\\
&=\Upsilon\left(\mathcal{M}_{1}-\mu_{1} \mathcal{E}_{1}, \mathcal{M}_{2}, \ldots, \mathcal{M}_{n}\right)
+\Upsilon\left(\mu_{1} \mathcal{E}_{1}, \mathcal{M}_{2}, \ldots, \mathcal{M}_{n}\right) \\
&=\left.\Upsilon(\mathcal{M}_{1}-\mu_{1} \mathcal{E}_{1}, \mathcal{M}_{2}, \ldots, \mathcal{M}_{n})+\Upsilon(\mu_{1} \mathcal{E}_{1}, \mathcal{M}_{2}-\mu_{2} \mathcal{E}_{2}, \mathcal{M}_{3},\right.\\
&\left.\ldots, \mathcal{M}_{n})+\Upsilon(\mu_{1} \mathcal{E}_{1}, \mu_{2} \mathcal{E}_{2}, \mathcal{M}_{3}, \ldots, \mathcal{M}_{n})\right. \\
& \vdots \\
&=\left.\Upsilon(\mathcal{M}_{1}-\mu_{1} \mathcal{E}_{1}, \mathcal{M}_{2}, \ldots, \mathcal{M}_{n})+\ldots+\Upsilon(\mu_{1} \mathcal{E}_{1}, \ldots, \mu_{n-1} \mathcal{E}_{n-1},\right.\\
&\left.\mathcal{M}_{n}-\mu_{n} \mathcal{E}_{n})
+\Upsilon(\mu_{1} \mathcal{E}_{1}, \ldots, \mu_{n} \mathcal{E}_{n})\right. .
\end{aligned}
\end{equation}
By the positivity of $\Upsilon$, we have
\begin{equation}
\begin{aligned}
0 & \leq \Upsilon\left(\lambda_{1} \mathcal{M}_{1}, \ldots, \lambda_{n}\mathcal{M}_{n}\right)-\Upsilon\left(\mathcal{E}_{1}, \ldots, \mathcal{E}_{n}\right) \\
&=\left(\prod_{i} \lambda_{i}\right) \Upsilon\left(\mathcal{M}_{1}, \ldots, \mathcal{M}_{n}\right)-\Upsilon\left(\mathcal{E}_{1}, \ldots, \mathcal{E}_{n}\right)\\
&=\left(\prod_{i} \lambda_{i}\right) \mathcal{M}^{\prime}-\Upsilon\left(\mathcal{E}_{1}, \ldots, \mathcal{E}_{n}\right)
\end{aligned}
\end{equation}
and
\begin{equation}
\begin{aligned}
0 & \leq \Upsilon\left(\mu_{1}\mathcal{E}_{1}, \ldots, \mu_{n}\mathcal{E}_{n}\right)-\Upsilon\left( \mathcal{M}_{1}, \ldots, \mathcal{M}_{n}\right) \\
&=\left(\prod_{i} \mu_{i}\right)\Upsilon\left(\mathcal{E}_{1}, \ldots, \mathcal{E}_{n}\right)- \Upsilon\left(\mathcal{M}_{1}, \ldots, \mathcal{M}_{n}\right) \\
&=\left(\prod_{i} \mu_{i}\right)\Upsilon\left(\mathcal{E}_{1}, \ldots, \mathcal{E}_{n}\right)- \mathcal{M}^{\prime}
\end{aligned}
\end{equation}
for $\Upsilon\left(\mathcal{M}_{1}, \ldots, \mathcal{M}_{n}\right)=\mathcal{M}^{\prime}\in \mathbb{O}$. From Eqs.(40) and (41), it follows that $\prod_{i}\Omega_{\mathbb{O}}(\mathcal{E}_i)=\prod_{i}\lambda_i\mu_i$ is a feasible optimal value for $\Omega_{\mathbb{O}}(\Upsilon(\mathcal{E}_1,...,\mathcal{E}_n))$, which leads to Eq.(37).
\end{proof}
Since the tensor product and composition are two special cases of free quantum process, the following corollary is immediately concluded.

\begin{corollary}
 If the resource theory is well-behaved under tensor product, i.e. $\mathcal{M}, \mathcal{M}'\in \mathbb{O}\Rightarrow \mathcal{M}\otimes \mathcal{M}'\in \mathbb{O}$, then
\begin{equation}
\Omega_{\mathbb{O}}(\mathcal{E}\otimes\mathcal{F})\leq \Omega_{\mathbb{O}}(\mathcal{E})\Omega_{\mathbb{O}}(\mathcal{F}).
\end{equation}
If the resource theory is well-behaved under composition, i.e. $\mathcal{M}, \mathcal{M}'\in \mathbb{O}\Rightarrow \mathcal{M}\circ \mathcal{M}'\in \mathbb{O}$, then
\begin{equation}
\Omega_{\mathbb{O}}(\mathcal{E}\circ\mathcal{F})\leq \Omega_{\mathbb{O}}(\mathcal{E})\Omega_{\mathbb{O}}(\mathcal{F}).
\end{equation}
\end{corollary}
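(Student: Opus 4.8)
The plan is to recognize that both the tensor product and the composition are nothing but instances of the $n$-channel free quantum process of Theorem 3 with $n=2$, so that the corollary follows by a direct application of that theorem rather than by any fresh computation. Concretely, for the tensor product I would introduce the bilinear map $\Upsilon_{\otimes}(\mathcal{E},\mathcal{F}):=\mathcal{E}\otimes\mathcal{F}$, and for the composition the bilinear map $\Upsilon_{\circ}(\mathcal{E},\mathcal{F}):=\mathcal{E}\circ\mathcal{F}$, and then show that each of these belongs to $\mathbb{S}_{(2)}$ as defined in Eq.(36).

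The verification proceeds in two checks. First I would confirm that $\Upsilon_{\otimes}$ and $\Upsilon_{\circ}$ are genuine $2$-linear maps sending a pair of channels to a single channel: for $\Upsilon_{\otimes}$ this is immediate from the bilinearity of $\otimes$ on linear maps, while for $\Upsilon_{\circ}$ it follows from the separate additivity of composition in each argument, namely $(\mathcal{E}_1+\mathcal{E}_2)\circ\mathcal{F}=\mathcal{E}_1\circ\mathcal{F}+\mathcal{E}_2\circ\mathcal{F}$ and $\mathcal{E}\circ(\mathcal{F}_1+\mathcal{F}_2)=\mathcal{E}\circ\mathcal{F}_1+\mathcal{E}\circ\mathcal{F}_2$. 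Second I would check the two defining conditions of $\mathbb{S}_{(2)}$. The $\text{CPTP}$-preservation requirement $\Upsilon(\mathcal{N}_1,\mathcal{N}_2)\in\text{CPTP}$ holds automatically, since both the tensor product and the composition of $\text{CPTP}$ maps are again $\text{CPTP}$. The free-preservation requirement $\Upsilon(\mathcal{M},\mathcal{M}')\in\mathbb{O}$ for all $\mathcal{M},\mathcal{M}'\in\mathbb{O}$ is precisely the hypothesis of each half of the corollary: well-behavedness under tensor product for $\Upsilon_{\otimes}$, and under composition for $\Upsilon_{\circ}$.

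Having established $\Upsilon_{\otimes},\Upsilon_{\circ}\in\mathbb{S}_{(2)}$, I would apply Theorem 3 with $n=2$ to the two channels $\mathcal{E}$ and $\mathcal{F}$, obtaining $\Omega_{\mathbb{O}}(\mathcal{E}\otimes\mathcal{F})\leq\Omega_{\mathbb{O}}(\mathcal{E})\Omega_{\mathbb{O}}(\mathcal{F})$ and $\Omega_{\mathbb{O}}(\mathcal{E}\circ\mathcal{F})\leq\Omega_{\mathbb{O}}(\mathcal{E})\Omega_{\mathbb{O}}(\mathcal{F})$, which are exactly Eqs.(42) and (43). There is essentially no hard step here, since all of the real work is already packaged inside Theorem 3; the only point meriting care is confirming that composition genuinely is bilinear as a map on the pair $(\mathcal{E},\mathcal{F})$ and that the input and output systems match up so that $\mathcal{E}\otimes\mathcal{F}$ and $\mathcal{E}\circ\mathcal{F}$ are well-defined single channels. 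The entire content of the corollary is thus the observation that the two most familiar ways of combining channels both fall under the umbrella of free quantum processes.
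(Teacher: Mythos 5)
Your proposal is correct and is exactly the paper's argument: the authors likewise justify the corollary in one line by noting that the tensor product and composition are special cases of a free quantum process, so the bound follows from the supermultiplicativity theorem with $n=2$. Your additional checks (bilinearity of $\otimes$ and $\circ$, CPTP-preservation, and that the well-behavedness hypotheses are precisely the free-preservation condition defining $\mathbb{S}_{(2)}$) only make explicit what the paper leaves implicit.
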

The universal bound of Theorem 2 and the sub- or supermultiplicativity  of  $\Omega_{\mathbb{O}}$ can provide limitations on the overhead required in the most general many-copy channel transformation protocol.
\begin{theorem}
For any distillation protocol $\Upsilon\in \mathbb{S}_{(n)}$ which transforms $n$ uses of a channel $\mathcal{E}$ to some target resourceful channel $\mathcal{T}$ up to accuracy $\varepsilon>0$, it necessarily holds that
\begin{equation}
n\geq\log_{\Omega_{\mathbb{O}}(\mathcal{E})}
\frac{(1-\varepsilon)(1-F_{\mathbb{O}}(\mathcal{T}))}{\varepsilon F_{\mathbb{O}}(\mathcal{T})}
\end{equation}
\end{theorem}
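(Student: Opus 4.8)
The plan is to read the $n$-copy protocol as producing a single output channel and then squeeze $\Omega_{\mathbb{O}}$ between the distillation bound of Theorem 2 and the submultiplicativity of Theorem 3. First I would set $\mathcal{N}:=\Upsilon(\mathcal{E}^{\times n})$ and record that ``distilling $n$ copies of $\mathcal{E}$ to $\mathcal{T}$ up to accuracy $\varepsilon$'' means precisely $F(\mathcal{N},\mathcal{T})\geq 1-\varepsilon$, with $\mathcal{T}$ a target whose outputs on pure inputs are pure, so that the worst-case fidelity coincides with the Choi-state overlap used in Eqs.~(31)--(35).

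The key observation is that the chain of inequalities producing Eq.~(35) in the proof of Theorem 2 uses only the fidelity condition $F(\mathcal{N},\mathcal{T})\geq 1-\varepsilon$ and the purity of the outputs of $\mathcal{T}$; it never uses the single-superchannel structure $\mathcal{N}=\Theta(\mathcal{E})$ except in the very last monotonicity step. Concretely, I would pick $\mathcal{M}\in\mathbb{O}$ optimal for $\Omega_{\mathbb{O}}(\mathcal{N})$, with $\mathcal{J}_{\mathcal{N}}\leq\lambda\mathcal{J}_{\mathcal{M}}$, $\mathcal{J}_{\mathcal{M}}\leq\mu\mathcal{J}_{\mathcal{N}}$ and $\lambda\mu=\Omega_{\mathbb{O}}(\mathcal{N})$, and re-run the two local estimates $1-\varepsilon\leq\lambda F_{\mathbb{O}}(\mathcal{T})$ and $1-F_{\mathbb{O}}(\mathcal{T})\leq\mu\varepsilon$ verbatim at the level of $\mathcal{N}$ and $\mathcal{T}$. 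This yields $\Omega_{\mathbb{O}}(\Upsilon(\mathcal{E}^{\times n}))\geq \frac{(1-\varepsilon)(1-F_{\mathbb{O}}(\mathcal{T}))}{\varepsilon F_{\mathbb{O}}(\mathcal{T})}$. I would then invoke Theorem 3 in its many-copy form $\Omega_{\mathbb{O}}(\Upsilon(\mathcal{E}^{\times n}))\leq\Omega_{\mathbb{O}}(\mathcal{E})^{n}$.

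Combining the two gives $\Omega_{\mathbb{O}}(\mathcal{E})^{n}\geq \frac{(1-\varepsilon)(1-F_{\mathbb{O}}(\mathcal{T}))}{\varepsilon F_{\mathbb{O}}(\mathcal{T})}$, and applying $\log_{\Omega_{\mathbb{O}}(\mathcal{E})}$ to both sides isolates $n$ and delivers Eq.~(44). The step to be careful about is the base of this logarithm: since $R_{\max}(\mathcal{E}\parallel\mathcal{M})R_{\max}(\mathcal{M}\parallel\mathcal{E})\geq 1$ one always has $\Omega_{\mathbb{O}}(\mathcal{E})\geq 1$, so the logarithm is orientation-preserving only when $\Omega_{\mathbb{O}}(\mathcal{E})>1$, i.e. when $\mathcal{E}$ is genuinely resourceful; the degenerate case $\Omega_{\mathbb{O}}(\mathcal{E})=1$ (free $\mathcal{E}$) would have to be handled separately, where the left side is stuck at $1$ and no resourceful $\mathcal{T}$ can be distilled, while $\Omega_{\mathbb{O}}(\mathcal{E})=\infty$ makes the bound vacuous. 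I would also note $F_{\mathbb{O}}(\mathcal{T})<1$ because $\mathcal{T}$ is resourceful, so the right-hand side is finite and positive. The only genuinely delicate point is justifying that the Theorem 2 estimate transfers unchanged to the output of a multilinear free process rather than a single free superchannel, which it does precisely because that estimate depends only on the inequalities relating $\mathcal{N}$, $\mathcal{M}$ and $\mathcal{T}$, and never on the single-superchannel monotonicity used at the end of Theorem 2.
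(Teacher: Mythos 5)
Your proposal is correct and follows exactly the route the paper intends: the paper's own proof is the one-line remark that the result ``immediately follows from Eqs.~(30) and (37),'' i.e.\ apply the Theorem~2 estimate $\Omega_{\mathbb{O}}(\mathcal{N})\geq\frac{(1-\varepsilon)(1-F_{\mathbb{O}}(\mathcal{T}))}{\varepsilon F_{\mathbb{O}}(\mathcal{T})}$ to $\mathcal{N}=\Upsilon(\mathcal{E}^{\times n})$ and combine it with the submultiplicativity $\Omega_{\mathbb{O}}(\Upsilon(\mathcal{E}^{\times n}))\leq\Omega_{\mathbb{O}}(\mathcal{E})^{n}$ of Theorem~3. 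Your added observation that the core of Theorem~2's proof uses only the relations among $\mathcal{N}$, $\mathcal{M}$ and $\mathcal{T}$ (and not the single-superchannel monotonicity step), together with the discussion of the degenerate bases $\Omega_{\mathbb{O}}(\mathcal{E})=1$ and $\infty$, makes the argument more careful than the paper's, but it is the same proof.
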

\begin{proof}
It immediately follows from Eqs.(30) and (37).
\end{proof}
\begin{remark}
Particularly, it is easy to see that the above theorem still holds if we take $\mathcal{T}$ as unitary channel $\mathcal{U}=U\cdot U^{\dagger}$ or the replacement channel $\mathcal{R}_{\phi}=\text{Tr}(\cdot)\phi$ for some resourceful pure state $\phi$. Moreover, when the input and target channels are the preparation channels $\mathcal{P}_\rho$ and $\mathcal{P}_\phi$, respectively, this result reduces to the result of deterministic state resource distillation overheads in Ref.\cite{Regula 21}, i.e.,
$n\geq\log_{\Omega_{\mathbb{F}}(\rho)}\frac{(1-\varepsilon)(1-F_{\mathbb{F}}(\phi))}{\varepsilon F_{\mathbb{F}}(\phi)}$, which characterize the number of copies of a state needed to perform the distillation $\mathcal{M}\left(\rho^{\otimes n}\right) \rightarrow \phi$ up to error $\varepsilon$.
\end{remark}
\subsection{Interpretation in state discrimination}
In this section, we will show a final application of $\Omega_{\mathbb{O}}$ in a variant of state discrimination task. Let us consider a scenario in which an ensemble of quantum states is given as $\{p_i,\sigma_i\}$ with $p_i\geq 0$ and $\sum_{i} p_{i}=1$, and a given bipartite channel $\mathbb{I}\otimes\Lambda$ is applied to every state. By measuring the output of this process with a chosen POVM $\{M_i\}$, we aim to distinguish which of the states was sent. The average probability of successfully discriminating the states is given by
\begin{equation}
p_{\text {succ }}\left(\left\{p_{i}, \sigma_{i}\right\},\left\{M_{i}\right\}, \mathbb{I}\otimes\Lambda\right)=\sum_{i} p_{i} \langle \mathbb{I}\otimes\Lambda(\sigma_i), M_{i}\rangle.
\end{equation}
When we consider state discrimination, our aim is to choose a measurement to maximize Eq.(45). The $R_{\mathbb{O}}$ is known to quantify the advantage that a given channel can provide over all free channels \cite{Takagi and Regula 19}. When we consider state exclusion, our goal is to do our best to determine which states was not sent. Then, Eq.(45) is understood as the average probability of guessing incorrectly, and our aim is to minimize this quantity. In state exclusion, it is the $W_{\mathbb{O}}$  which quantifies the advantage \cite{Uola20,Ye et.al 21}. Next, we can show that $\Omega_{\mathbb{O}}(\Lambda)$ quantifies the advantage when state discrimination and exclusion tasks are considered at the same time.

\begin{theorem}
The projective robustness of channels quantifies the maximal advantage that a given channel $\Lambda$ gives over all free channels $\Xi\in\mathbb{O}$ in simultaneous discrimination and exclusion of a fixed state ensemble, as quantified by the ratio
\begin{equation}
\frac{p_{\text{succ}}\left(\left\{p_{i}, \sigma_{i}\right\},\left\{M_{i}\right\}, \mathbb{I}\otimes\Lambda\right)}{p_{\text{succ}}\left(\left\{p_{i}, \sigma_{i}\right\},\left\{N_{i}\right\}, \mathbb{I}\otimes\Lambda\right)},
\end{equation}
where we aim to perform state discrimination with measurement $\{M_i\}$ and state exclusion with measurement $\{N_i\}$.

Specifically,
\begin{equation}
\begin{aligned}
\sup _{\left\{p_{i}, \sigma_{i}\right\} \atop\left\{M_{i}\right\},\left\{N_{i}\right\}} \frac{\frac{p_{\text {succ }}\left(\left\{p_{i}, \sigma_{i}\right\},\left\{M_{i}\right\}, \mathbb{I}\otimes\Lambda\right)}{p_{\text {succ }}\left(\left\{p_{i}, \sigma_{i}\right\},\left\{N_{i}\right\}, \mathbb{I}\otimes\Lambda\right)}}{\mathop{\max} \limits _{\Xi \in \mathbb{O}} \frac{p_{\text {succ }}\left(\left\{p_{i}, \sigma_{i}\right\},\left\{M_{i}\right\}, \mathbb{I}\otimes\Xi\right)}{p_{\text {succ }}\left(\left\{p_{i}, \sigma_{i}\right\},\left\{N_{i}\right\}, \mathbb{I}\otimes\Xi\right)}}=\Omega_{\mathbb{O}}(\Lambda),
\end{aligned}
\end{equation}
where the maximization is over all finite state ensembles $\{p_i,\sigma_i\}_{i=1}^n$ and all POVMs $\{M_i\}_{i=1}^n$, $\{N_i\}_{i=1}^n$ for which the expression in Eq. (47) is well defined.
\end{theorem}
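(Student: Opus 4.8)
The plan is to pass to the Choi picture and reduce both the figure of merit and the target $\Omega_{\mathbb{O}}(\Lambda)$ to the dual conic program in part (v) of Theorem 1. First I would rewrite each success probability as a linear functional of the Choi matrix: since $\langle\mathbb{I}\otimes\Lambda(\sigma_i),M_i\rangle$ is linear in $\mathcal{J}_\Lambda$, there is an operator $X_i^M$ with $\langle X_i^M,\mathcal{J}_\Lambda\rangle=\langle\mathbb{I}\otimes\Lambda(\sigma_i),M_i\rangle$, so that $p_{\text{succ}}(\{p_i,\sigma_i\},\{M_i\},\mathbb{I}\otimes\Lambda)=\langle A,\mathcal{J}_\Lambda\rangle$ with $A=\sum_i p_iX_i^M$, and likewise $p_{\text{succ}}(\{p_i,\sigma_i\},\{N_i\},\mathbb{I}\otimes\Lambda)=\langle B,\mathcal{J}_\Lambda\rangle$ with $B=\sum_i p_iX_i^N$. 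Because this functional is nonnegative on every $\mathcal{J}\ge0$ (by Choi's theorem such a $\mathcal{J}$ is the Choi matrix of a CP map, and then $\langle M_i,\mathbb{I}\otimes\Lambda(\sigma_i)\rangle\ge0$), self-duality of the positive semidefinite cone forces $X_i^M,X_i^N\ge0$ and hence $A,B\ge0$. With this identification the left-hand side of Eq.(47) becomes a supremum, over all achievable pairs $(A,B)$ with $A,B\ge0$, of the quantity $\frac{\langle A,\mathcal{J}_\Lambda\rangle/\langle B,\mathcal{J}_\Lambda\rangle}{\max_{\Xi\in\mathbb{O}}\langle A,\mathcal{J}_\Xi\rangle/\langle B,\mathcal{J}_\Xi\rangle}$, whose ingredients are exactly the dual variables of part (v) of Theorem 1.

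For the inequality $\le\Omega_{\mathbb{O}}(\Lambda)$, I would take any achievable $(A,B)$ and set $c=\max_{\Xi\in\mathbb{O}}\langle A,\mathcal{J}_\Xi\rangle/\langle B,\mathcal{J}_\Xi\rangle$. Then $(A/c,B)$ satisfies $\langle A/c,\mathcal{J}_\Xi\rangle/\langle B,\mathcal{J}_\Xi\rangle\le1$ for every $\Xi\in\mathbb{O}$, so it is feasible for the dual program, giving $\langle A/c,\mathcal{J}_\Lambda\rangle/\langle B,\mathcal{J}_\Lambda\rangle\le\Omega_{\mathbb{O}}(\Lambda)$; this says exactly that the term above is bounded by $\Omega_{\mathbb{O}}(\Lambda)$, and taking the supremum proves the direction. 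This step is essentially immediate once the Choi reformulation is in place.

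The reverse inequality is the substantive part and amounts to a realizability construction. I would fix a near-optimal dual pair $(A^\ast,B^\ast)$ with $\langle A^\ast,\mathcal{J}_\Lambda\rangle/\langle B^\ast,\mathcal{J}_\Lambda\rangle\ge\Omega_{\mathbb{O}}(\Lambda)-\epsilon$ and $c^\ast=\max_{\Xi}\langle A^\ast,\mathcal{J}_\Xi\rangle/\langle B^\ast,\mathcal{J}_\Xi\rangle\le1$, and then exhibit an ensemble together with two POVMs reproducing it. The crucial observation I would exploit is that independent positive rescalings $A^\ast\mapsto\alpha A^\ast$ and $B^\ast\mapsto\beta B^\ast$ cancel in the doubly-ratioed figure of merit, so it suffices to realize $A^\ast$ and $B^\ast$ each only up to a free positive factor. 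Taking every state in the ensemble to be the normalized maximally entangled state $\Phi^+/d_A$ gives $\mathbb{I}\otimes\Lambda(\Phi^+/d_A)=\widetilde{J}_\Lambda$, so $p_{\text{succ}}=\frac{1}{d_A}\langle\sum_i p_iM_i,\mathcal{J}_\Lambda\rangle$ and the task reduces to realizing $\sum_i p_iM_i=\alpha d_AA^\ast$ and $\sum_i p_iN_i=\beta d_AB^\ast$ with one common distribution $\{p_i\}$. I would choose $\alpha,\beta$ so that both operators have trace $\tfrac12$, use $2d$ outcomes split into two blocks, let the probabilities equal the eigenvalues of $\alpha d_AA^\ast$ on the first block and of $\beta d_AB^\ast$ on the second, let $\{M_i\}$ be the eigenprojectors of $A^\ast$ on the first block and zero on the second, and let $\{N_i\}$ be zero on the first block and the eigenprojectors of $B^\ast$ on the second. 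Both are valid POVMs summing to $\mathbb{I}$, and by construction $\sum_i p_iM_i=\alpha d_AA^\ast$ and $\sum_i p_iN_i=\beta d_AB^\ast$; substituting, the free factors cancel and the figure of merit equals $\frac{\langle A^\ast,\mathcal{J}_\Lambda\rangle/\langle B^\ast,\mathcal{J}_\Lambda\rangle}{c^\ast}\ge\Omega_{\mathbb{O}}(\Lambda)-\epsilon$, and letting $\epsilon\to0$ closes the gap.

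The hard part is precisely this realizability step: arranging a single common ensemble whose two associated POVMs simultaneously encode two a priori unrelated dual operators. The disjoint-support, block-diagonal construction is what makes this possible, and it relies on the cancellation of the scalings $\alpha,\beta$ to decouple the two constraints. Two minor technical points I would handle alongside it are the well-definedness of all ratios, ensured by taking $B^\ast$ to contain a small multiple of the identity so that $\langle B^\ast,\mathcal{J}_\Xi\rangle\ne0$ exactly as in the proof of part (v) of Theorem 1, and the possible non-attainment of the dual supremum, which is harmless because the $\epsilon$-approximation never requires the optimum to be attained.
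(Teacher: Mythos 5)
Your proof is correct, and its skeleton matches the paper's: both directions rest on the conic dual of $\Omega_{\mathbb{O}}$ from Theorem 1(v), and the lower bound is obtained by realizing a (near-)optimal dual pair $(A,B)$ through measurements performed on the maximally entangled state, with the normalization constants cancelling in the doubly-ratioed figure of merit. The differences are in execution. For the upper bound, the paper works in the primal: it takes $\mathcal{J}_\Lambda\leq\lambda\mathcal{J}_\Xi$ and $\mathcal{J}_\Xi\leq\mu\mathcal{J}_\Lambda$ and bounds numerator and denominator separately by monotonicity of $p_{\mathrm{succ}}$ under the CP order, whereas you rescale $(A,B)\mapsto(A/c,B)$ into dual feasibility; both are valid, and yours has the advantage of keeping the whole argument inside a single (dual) formulation. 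For the lower bound, the paper's construction is considerably leaner: it uses the degenerate ensemble $p_1=1$, $\sigma_1=|\Phi^+\rangle\langle\Phi^+|$, $p_2=0$ together with the two-outcome POVMs $\{A/\|A\|_\infty,\ \mathbb{I}-A/\|A\|_\infty\}$ and $\{B/\|B\|_\infty,\ \mathbb{I}-B/\|B\|_\infty\}$, so that $A$ and $B$ are encoded directly as (rescaled) effects and only the first outcome contributes. Your $2d$-outcome block construction, which pushes the eigenvalues of $\alpha d_A A^\ast$ and $\beta d_A B^\ast$ into the prior distribution and uses eigenprojectors as effects, achieves exactly the same identity $p_{\mathrm{succ}}=\mathrm{const}\cdot\langle A^\ast,\mathcal{J}_\Lambda\rangle$ at the cost of more bookkeeping; its one structural payoff is that both tasks share a single genuinely normalized prior rather than a weight-zero dummy state. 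Your handling of the edge cases (nonvanishing denominators via a small identity perturbation of $B^\ast$, and non-attainment of the dual supremum via an $\epsilon$-argument) matches the paper's implicit treatment.
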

\begin{proof}
Let us assume that $\Omega_{\mathbb{O}}(\Lambda)$ is finite and take $\mathcal{J}_{\Lambda}\leq \lambda\mathcal{J}_{\Xi}$ and $\mathcal{J}_{\Xi}\leq \mu\mathcal{J}_{\Lambda}$ such that $\Omega_{\mathbb{O}}(\Lambda)=\lambda\mu$. Let $\{p_i,\sigma_i\}$ be any state ensemble, and $\{M_i\}_{i=1}^n$, $\{N_i\}_{i=1}^n$ be any feasible POVMs. Then
\begin{equation}
\begin{aligned}
&\frac{p_{\text {succ }}\left(\left\{p_{i}, \sigma_{i}\right\},\left\{M_{i}\right\}, \mathbb{I}\otimes\Lambda\right)}{p_{\text {succ }}\left(\left\{p_{i}, \sigma_{i}\right\},\left\{N_{i}\right\}, \mathbb{I}\otimes\Lambda\right)}\\
&\qquad \leq \frac{\lambda p_{\text {succ }}\left(\left\{p_{i}, \sigma_{i}\right\},\left\{M_{i}\right\}, \mathbb{I}\otimes\Xi\right)}{\mu^{-1} p_{\text {succ }}\left(\left\{p_{i}, \sigma_{i}\right\},\left\{N_{i}\right\}, \mathbb{I}\otimes\Xi\right)} \\
&\qquad \leq \lambda \mu \max _{\Xi \in \mathbb{O}} \frac{ p_{\text {succ }}\left(\left\{p_{i}, \sigma_{i}\right\},\left\{M_{i}\right\}, \mathbb{I}\otimes\Xi\right)}{p_{\text {succ }}\left(\left\{p_{i}, \sigma_{i}\right\},\left\{N_{i}\right\}, \mathbb{I}\otimes\Xi\right)},
\end{aligned}
\end{equation}
where the first inequality follows by the complete positivity of $\Lambda$ and linearity of $p_{\text{succ}}$. This implies that $\Omega_{\mathbb{O}}(\Lambda)$ always upper bounds the left hand side(LHS) of Eq.(47). Next, we choose two POVMs $\{M_i\}_{i=1}^2=\left\{\frac{A}{\|A\|_{\infty}}, \mathbb{I}-\frac{A}{\|A\|_{\infty}}\right\}$ and $\{N_i\}_{i=1}^2=\left\{\frac{B}{\|B\|_{\infty}}, \mathbb{I}-\frac{B}{\|B\|_{\infty}}\right\}$ where $A$ and $B$ are any dual feasible solutions for $\Omega_{\mathbb{O}}(\Lambda)$, and define the state ensemble $\{p_i,\sigma_i\}_{i=1}^2$ as $p_1=1$, $\sigma_1=\mid\Phi^+\rangle\langle\Phi^+\mid$, and $p_2=0$ where $\mid\Phi^+\rangle\langle\Phi^+\mid$ is the maximally entangled state and $\sigma_2$ is an arbitrary state. This gives
\begin{equation}
\begin{aligned}
\frac{\frac{p_{\text {succ }}\left(\left\{p_{i}, \sigma_{i}\right\},\left\{M_{i}\right\}, \mathbb{I}\otimes\Lambda\right)}{p_{\text {succ }}\left(\left\{p_{i}, \sigma_{i}\right\},\left\{N_{i}\right\}, \mathbb{I}\otimes\Lambda\right)}}{\mathop{\max} \limits _{\Xi \in \mathbb{O}} \frac{p_{\text {succ }}\left(\left\{p_{i}, \sigma_{i}\right\},\left\{M_{i}\right\}, \mathbb{I}\otimes\Xi\right)}{p_{\text {succ }}\left(\left\{p_{i}, \sigma_{i}\right\},\left\{N_{i}\right\}, \mathbb{I}\otimes\Xi\right)}}&=\frac{\frac{\langle A, \mathbb{I}\otimes \Lambda(\mid\Phi^+\rangle\langle\Phi^+\mid)\rangle}{\langle B, \mathbb{I}\otimes \Lambda(\mid\Phi^+\rangle\langle\Phi^+\mid)\rangle}}{\mathop{\max} \limits _{\Xi \in \mathbb{O}} \frac{\langle A, \mathbb{I}\otimes \Xi(\mid\Phi^+\rangle\langle\Phi^+\mid)\rangle}{\langle B, \mathbb{I}\otimes \Xi(\mid\Phi^+\rangle\langle\Phi^+\mid)\rangle}}\\
&=\frac{\langle A, \mathcal{J}_\Lambda\rangle}{\langle B, \mathcal{J}_\Lambda\rangle} \min _{\Xi \in \mathbb{O}} \frac{\langle B, \mathcal{J}_{\Xi}\rangle}{\langle A, \mathcal{J}_{\Xi}\rangle}\\
&\geq \frac{\langle A, \mathcal{J}_{\Lambda}\rangle}{\langle B, \mathcal{J}_{\Lambda}\rangle},
\end{aligned}
\end{equation}
where the last inequality follows from any feasible $A$, $B$ satisfying $\langle A, \mathcal{J}_\Xi \rangle\leq \langle B, \mathcal{J}_\Xi \rangle, \forall \Xi\in \mathbb{O}$. Here we have restricted operators $A$, $B$ such that $\langle A, \mathcal{J}_\Lambda \rangle\neq 0\neq \langle B, \mathcal{J}_\Lambda \rangle$ and $\langle A, \mathcal{J}_\Xi \rangle\neq 0\neq\langle B, \mathcal{J}_\Xi \rangle, \forall \Xi\in \mathbb{O}$ to ensure that Eq.(47) is well defined. Since the supremum over all feasible
$A$, $B$ is precisely $\Omega_{\mathbb{O}}(\Lambda)$, this shows that the projective robustness of channels is a lower bound on the LHS of Eq.(47). The proof is completed.
\end{proof}
Note that the above theorem applies to any convex resource theory of channels, thus providing a general operational meaning for the projective robustness of channels.

\section{Projective robustness of measurements}
It is well-known that any meaningful information process involves a measurement at the end, so it is quite natural and necessary to study the quantification and manipulation of quantum measurement resources \cite{Oszmaniec17,Guerini17,Oszmaniec19,Skrzypczyk19, Baek20,Takagi and Regula 19,Oszmaniec19b,Ducuara20,Uola20,Ye et.al 21}. In this section, we will introduce a new measurement resource monotone with a clear operational interpretation. To accomplish this goal, we start by recalling the generalized robustness and weight for measurements \cite{Takagi and Regula 19,Oszmaniec19b,Ducuara20,Uola20}
\begin{equation}
\begin{aligned}
R_{\mathcal{M_F}}(\mathbb{M})&=\left.\inf\{r \ | \ M_i+rN_i=(1+r)F_i, \forall i, \{N_i\}\in \mathcal{M}(d,n), \right.\\
&\qquad \left.\{F_i\}\in \mathcal{M_F}\}\right.,\\
W_{\mathcal{M_F}}(\mathbb{M})&=\left.\inf\{w \ | \ M_i=wN_i+(1-w)F_i , \forall i, \{N_i\}\in \mathcal{M}(d,n),\right.\\
&\qquad \left. \{F_i\}\in \mathcal{M_F}\}\right..
\end{aligned}
\end{equation}
In fact, it is easy to verify that they are equivalent to the following forms:
\begin{equation}
\begin{aligned}
R_{\mathcal{M_F}}(\mathbb{M})=\inf\{\lambda \ | \ M_i\leq \lambda F_i, \forall i, \{F_i\}\in \mathcal{M_F}\},\\
W_{\mathcal{M_F}}(\mathbb{M})=\sup\{\mu \ | \ M_i\geq \mu F_i, \forall i, \{F_i\}\in \mathcal{M_F}\},
\end{aligned}
\end{equation}
where inequality $M_i\leq \lambda N_i$ is understood as ordering of positive semidefinite operators.

\begin{definition}
For a given measurement $\mathbb{M}=\{M_i\}$, its projective robustness with respect to $\mathcal{M_F}$ is defined as
\begin{equation}
\begin{aligned}
\Omega_{\mathcal{M_F}}(\mathbb{M})&=\min_{\mathbb{N}\in \mathcal{M_F}}R_{max}(\mathbb{M}\parallel\mathbb{N})R_{max}(\mathbb{N}\parallel\mathbb{M}),
\end{aligned}
\end{equation}
where $R_{max}(\mathbb{M}\parallel\mathbb{N})=\inf\{\lambda \mid M_i\leq \lambda N_i, \forall i\}$ and $R_{max}(\mathbb{N}\parallel\mathbb{M})=\inf\{\mu \mid N_i\leq \mu M_i, \forall i\}$ for any two measurements $\mathbb{M}=\{M_i\}_i$  and $\mathbb{N}=\{N_i\}_i$.
\end{definition}

\begin{theorem}
The projective robustness of channels $\Omega_{\mathcal{M_F}}(\mathbb{M})$ satisfies the following properties:\\
(i)  $\Omega_{\mathcal{M_F}}(\mathbb{M})$ is finite if and only if there exists a free measurement $\mathbb{N}\in \mathcal{M_F}$ such that $\operatorname{supp}(M_i)=\operatorname{supp}(N_i)$ for all $i$.\\
(ii) $\Omega_{\mathcal{M_F}}(k\mathbb{M})=\Omega_{\mathcal{M_F}}(\mathbb{M})$ for any $k>0$.\\
(iii) When $\mathcal{M_F}$ is a convex set, $\Omega_{\mathcal{M_F}}$ is quasiconvex: for any $t\in[0,1]$, it holds that
\begin{equation}
\Omega_{\mathcal{M_F}}(t\mathbb{M}+(1-t)\mathbb{N})\leq \max \{\Omega_{\mathcal{M_F}}(\mathbb{M}),\Omega_{\mathcal{M_F}}(\mathbb{N})\}.
\end{equation}
(iv) $\Omega_{\mathcal{M_F}}(\Gamma(\mathbb{M}))\leq\Omega_{\mathcal{M_F}}(\mathbb{M})$ for any free operation $\Gamma\in \mathcal{O_F}$, i.e. $\Gamma(\mathcal{M_F})\subseteq \mathcal{M_F}$.\\
(v) When $\mathcal{M_F}$ is a compact convex set, $\Omega_{\mathcal{M_F}}$ can be computed as the optimal value of a conic linear optimization problem:
\begin{equation}
\begin{aligned}
&\Omega_{\mathcal{M_F}}(\mathbb{M})\\
&=\inf\left\{\gamma \in \mathbb{R}_+ \ \Big| \ M_i\leq \widetilde{N}_i\leq \gamma M_i, \{\widetilde{N}_i\}\in \operatorname{cone}(\mathcal{M_F})\right\}\\
&\left.\sup\Bigg\{\sum_i\langle A_i,M_i\rangle \ \Big| \sum_i\langle B_i,M_i\rangle=1,\ A_i,B_i\geq 0 ~\forall i, \right.\\
&~~~~~~~~~~~~~~~~~~~~~~~~~~~~~~~~~~~~~~~~~~~~~~~~~~~~~~~\left.\{B_i-A_i\}\in \operatorname{cone}(\mathcal{M_F})^* \Big\}\right.\\
&=\sup\left\{\frac{\sum_i\langle A_i,M_i\rangle}{\sum_i\langle B_i,M_i\rangle} \ \Big| \  A_i,B_i\geq 0 ~\forall i, \frac{\sum_i\langle A_i,N_i\rangle}{\sum_i\langle B_i,N_i\rangle}\leq 1, \forall \{N_i\}\in \mathcal{M_F} \right\},
\end{aligned}
\end{equation}
where $\operatorname{cone}(\mathcal{M_F})=\{\lambda \mathbb{N} \mid \lambda \in \mathbb{R}_+, \mathbb{N}\in \mathcal{M_F}\}$ is the closed convex cone generated by the set $\mathcal{M_F}$, and $\operatorname{cone}(\mathcal{M_F})^*=\{\mathbb{X} \mid \langle \mathbb{X},\mathbb{N}\rangle\geq 0, \forall \mathbb{N} \in \mathcal{M_F}\}$. Note that the compactness of $\mathcal{M_F}$ leads to the fact that the infimum is achieved as long as it is finite.\\
(vi) It can be bounded as
\begin{equation}
\begin{aligned}
R_{\mathcal{M_F}}(\mathbb{M})W_{\mathcal{M_F}}(\mathbb{M})^{-1}\leq&\Omega_{\mathcal{M_F}}(\mathbb{M})\leq R_{\mathcal{M_F}}(\mathbb{M})R_{\max}(\mathbb{N}^*_R\parallel \mathbb{M})\\
&\Omega_{\mathcal{M_F}}(\mathbb{M})\leq W_{\mathcal{M_F}}(\mathbb{M})^{-1}R_{\max}(\mathbb{M}\parallel \mathbb{N}^*_W),
\end{aligned}
\end{equation}
where $\mathbb{N}_R^*$ is an optimal channel such that $R_{\mathcal{M_F}}(\mathbb{M})=R_{\max}(\mathbb{M}\parallel\mathbb{N}^*_R)$, and similarly $\mathcal{M}_W^*$ is an optimal channel such that $W_{\mathcal{M_F}}(\mathbb{M})=R_{\max}(\mathbb{N}^*_W\parallel\mathbb{M})^{-1}$, whenever such channels exist.\\
\end{theorem}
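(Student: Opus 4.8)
The plan is to mirror the proof of Theorem 1 essentially line by line, replacing the single Choi matrix $\mathcal{J_E}$ by the tuple of effects $\{M_i\}$ and the Hilbert--Schmidt inner product by the tuple inner product $\langle\mathbb{A},\mathbb{B}\rangle=\sum_i\operatorname{Tr}(A_iB_i)$. Properties (i) and (ii) are immediate: since $R_{\max}(\mathbb{M}\parallel\mathbb{N})<\infty$ holds precisely when $\operatorname{supp}(M_i)\subseteq\operatorname{supp}(N_i)$ for every $i$ (the state-level criterion applied effect by effect), finiteness of the product in Eq.(52) forces both inclusions and hence $\operatorname{supp}(M_i)=\operatorname{supp}(N_i)$; and the homogeneity $R_{\max}(k\mathbb{M}\parallel\mathbb{N})=kR_{\max}(\mathbb{M}\parallel\mathbb{N})$ together with $R_{\max}(\mathbb{N}\parallel k\mathbb{M})=k^{-1}R_{\max}(\mathbb{N}\parallel\mathbb{M})$ makes the product scale-invariant.

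For (iii) I would take optimal free measurements $\{F_i\}$ and $\{F_i'\}$ witnessing $\Omega_{\mathcal{M_F}}(\mathbb{M})=\lambda\mu$ and $\Omega_{\mathcal{M_F}}(\mathbb{N})=\lambda'\mu'$, then run the identical two-inequality manipulation of Eqs.(14)--(16) effect by effect; convexity of $\mathcal{M_F}$ guarantees that $[t\lambda\{F_i\}+(1-t)\lambda'\{F_i'\}]/[t\lambda+(1-t)\lambda']$ is again a free measurement, and the same bookkeeping yields the maximum on the right of Eq.(53). Property (iv) is equally direct: a free operation $\Gamma\in\mathcal{O_F}$ is positive and convex-linear, so $M_i\leq\lambda F_i$ implies $\Gamma(M_i)\leq\lambda\Gamma(F_i)$ and likewise for the reverse inequality, while $\Gamma(\{F_i\})\in\mathcal{M_F}$; the transformed free measurement is therefore feasible for $\Omega_{\mathcal{M_F}}(\Gamma(\mathbb{M}))$ with the same product $\lambda\mu$.

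The substance of the theorem is (v), and here I would reproduce the conic-duality argument of Theorem 1(v). First rewrite Eq.(52) as $\inf\{\lambda\mu\mid M_i\leq\lambda F_i,\ F_i\leq\mu M_i,\ \{F_i\}\in\mathcal{M_F}\}$ and, via the substitution $\widetilde N_i=\lambda F_i$, $\gamma=\lambda\mu$, identify it with the cone problem $\inf\{\gamma\mid M_i\leq\widetilde N_i\leq\gamma M_i,\ \{\widetilde N_i\}\in\operatorname{cone}(\mathcal{M_F})\}$ appearing in Eq.(54). I would then form the Lagrangian with multipliers $\{A_i\},\{B_i\}\geq0$ for the two families of operator inequalities and $\{C_i\}\in\operatorname{cone}(\mathcal{M_F})^*$ for the cone membership, collect the coefficient of $\gamma$ and of each $\widetilde N_i$ exactly as in Eq.(20), and read off the dual, which is the first supremum expression in Eq.(54); the final ratio form then follows by rescaling $A_i\mapsto A_i/\sum_j\langle B_j,M_j\rangle$ and $B_i\mapsto B_i/\sum_j\langle B_j,M_j\rangle$. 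Strong duality is secured by Slater's condition, taking $B_i=\mathbb{I}$ and $A_i=\epsilon\mathbb{I}$ with $0<\epsilon<1$ as a strictly feasible dual point, and compactness of $\mathcal{M_F}$ guarantees attainment of the primal infimum.

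Finally, (vi) splits as in Theorem 1(vii): the lower bound follows from $\min_{\mathbb{N}}[R_{\max}(\mathbb{M}\parallel\mathbb{N})R_{\max}(\mathbb{N}\parallel\mathbb{M})]\geq[\min_{\mathbb{N}}R_{\max}(\mathbb{M}\parallel\mathbb{N})][\min_{\mathbb{N}}R_{\max}(\mathbb{N}\parallel\mathbb{M})]=R_{\mathcal{M_F}}(\mathbb{M})W_{\mathcal{M_F}}(\mathbb{M})^{-1}$, exactly as in Eq.(22), while the two upper bounds in Eq.(55) come from inserting the optimizers $\mathbb{N}_R^*$ and $\mathbb{N}_W^*$ directly into the definition Eq.(52). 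I expect the only genuine obstacle to be the careful set-up of the conic program in (v): one must check that the tuple dual cone $\operatorname{cone}(\mathcal{M_F})^*$ pairs correctly against the three-term Lagrangian and that the proposed strictly-feasible point lies in the relative interior, after which every remaining part is a faithful transcription of the already-established channel and state arguments.
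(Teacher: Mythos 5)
Your proposal is correct and follows essentially the same route as the paper's own proof in Appendix A: the same effect-by-effect transcription of Theorem 1 for (i)--(iv) and (vi), and the same substitution $\widetilde{N}_i=\lambda F_i$, Lagrangian, Slater point $B_i=\mathbb{I}$, $A_i=\epsilon\mathbb{I}$, and rescaling for the conic duality in (v). No substantive differences to report.
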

\begin{proof}
The proof is similar as Theorem 1, and the proof is given in ``Appendix A" for completeness.
\end{proof}

{\textbf{Interpretation in state discrimination.}} As a application of the projective robustness $\Omega_{\mathcal{M_F}}$, we will show that $\Omega_{\mathcal{M_F}}(\mathbb{M})$ quantifies the advantage that any resourceful measurement provides over all resourceless measurements when state discrimination and exclusion tasks are considered at the same time.

\begin{theorem}
The projective robustness of measurements quantifies the maximal advantage that a given measurement $\mathbb{M}=\{M_i\}$ gives over all free measurements $\mathbb{N}\in\mathcal{M_F}$ in simultaneous discrimination of a fixed state ensemble $\{p_i,\sigma_i\}$ and exclusion of a fixed state ensemble $\{q_i,\tau_i\}$, as quantified by the ratio
\begin{equation}
\frac{p_{\text{succ}}\left(\left\{p_{i}, \sigma_{i}\right\},\left\{M_{i}\right\}\right)}{p_{\text{succ}}\left(\left\{q_{i}, \tau_{i}\right\},\left\{M_{i}\right\}\right)},
\end{equation}
where we aim to perform state discrimination and exclusion with the same measurement $\{M_i\}$.

Specifically,
\begin{equation}
\begin{aligned}
\sup _{\left\{p_{i}, \sigma_{i}\right\} \atop\left\{q_{i},\tau_i\right\}} \frac{\frac{p_{\text {succ }}\left(\left\{p_{i}, \sigma_{i}\right\},\left\{M_{i}\right\}\right)}{p_{\text {succ }}\left(\left\{q_{i}, \tau_{i}\right\},\left\{M_{i}\right\}, \right)}}{\mathop{\max} \limits _{\mathbb{N} \in \mathcal{M_F}} \frac{p_{\text {succ }}\left(\left\{p_{i}, \sigma_{i}\right\},\left\{N_{i}\right\}\right)}{p_{\text {succ }}\left(\left\{q_{i}, \tau_{i}\right\},\left\{N_{i}\right\}\right)}}=\Omega_{\mathcal{M_F}}(\mathbb{M}),
\end{aligned}
\end{equation}
where the maximization is over all finite state ensembles $\{p_i,\sigma_i\}_{i=1}^n$ and $\{q_i,\tau_i\}_{i=1}^n$ for which the expression in Eq.(57) is well defined.
\end{theorem}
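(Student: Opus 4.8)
The plan is to follow the template of the proof of Theorem 5, establishing the two matching inequalities separately via the primal and dual descriptions of $\Omega_{\mathcal{M_F}}(\mathbb{M})$ recorded in part (v) of Theorem 6. First I would dispose of the trivial case $\Omega_{\mathcal{M_F}}(\mathbb{M})=\infty$ and otherwise fix an optimal free measurement $\mathbb{N}\in\mathcal{M_F}$ together with scalars $\lambda,\mu$ satisfying $M_i\leq\lambda N_i$ and $N_i\leq\mu M_i$ for all $i$, with $\lambda\mu=\Omega_{\mathcal{M_F}}(\mathbb{M})$. Because $p_{\text{succ}}$ is linear in the effects and each $\sigma_i,\tau_i\geq 0$, the first family of inequalities gives $p_{\text{succ}}(\{p_i,\sigma_i\},\{M_i\})\leq\lambda\,p_{\text{succ}}(\{p_i,\sigma_i\},\{N_i\})$ for the discrimination numerator, while the second gives $p_{\text{succ}}(\{q_i,\tau_i\},\{M_i\})\geq\mu^{-1}p_{\text{succ}}(\{q_i,\tau_i\},\{N_i\})$ for the exclusion denominator. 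Dividing these bounds shows that the ratio in Eq.(56) for $\mathbb{M}$ is at most $\lambda\mu=\Omega_{\mathcal{M_F}}(\mathbb{M})$ times the corresponding ratio for $\mathbb{N}$, hence at most $\Omega_{\mathcal{M_F}}(\mathbb{M})\max_{\mathbb{N}\in\mathcal{M_F}}(\cdots)$, so $\Omega_{\mathcal{M_F}}(\mathbb{M})$ upper bounds the left-hand side of Eq.(57).

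For the reverse inequality I would take any dual feasible pair $\{A_i\}_i,\{B_i\}_i$ from the last line of Eq.(54) and encode the two ensembles directly from the two families of dual variables: set $\sigma_i=A_i/\operatorname{Tr}(A_i)$ with weights $p_i=\operatorname{Tr}(A_i)/\sum_j\operatorname{Tr}(A_j)$, and $\tau_i=B_i/\operatorname{Tr}(B_i)$ with weights $q_i=\operatorname{Tr}(B_i)/\sum_j\operatorname{Tr}(B_j)$, letting any outcome with $A_i=0$ (resp.\ $B_i=0$) carry zero weight. A short computation then gives $p_{\text{succ}}(\{p_i,\sigma_i\},\{M_i\})=\sum_i\langle A_i,M_i\rangle/\sum_j\operatorname{Tr}(A_j)$, and likewise for the $B$-ensemble and for an arbitrary $\mathbb{N}$, so that when the left-hand side of Eq.(57) is formed every trace normalization cancels and the expression collapses to $\frac{\sum_i\langle A_i,M_i\rangle}{\sum_i\langle B_i,M_i\rangle}\,\min_{\mathbb{N}\in\mathcal{M_F}}\frac{\sum_i\langle B_i,N_i\rangle}{\sum_i\langle A_i,N_i\rangle}$. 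The dual feasibility constraint $\sum_i\langle A_i,N_i\rangle\leq\sum_i\langle B_i,N_i\rangle$ forces the minimum to be at least $1$, so the quantity is bounded below by $\frac{\sum_i\langle A_i,M_i\rangle}{\sum_i\langle B_i,M_i\rangle}$; taking the supremum over all dual feasible $\{A_i\},\{B_i\}$ recovers precisely $\Omega_{\mathcal{M_F}}(\mathbb{M})$, matching the upper bound and closing the proof.

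I expect the only real obstacle to be the bookkeeping of well-definedness rather than any genuine difficulty. The two ratios must have nonzero denominators, so I must restrict attention to dual variables for which $\sum_i\langle B_i,M_i\rangle\neq 0$ and $\sum_i\langle A_i,N_i\rangle\neq 0\neq\sum_i\langle B_i,N_i\rangle$ for every $\mathbb{N}\in\mathcal{M_F}$, exactly as in the proof of Theorem 5. As there, this can be enforced by perturbing each $A_i,B_i$ by a small multiple of the identity, which neither violates positivity nor alters the supremum defining $\Omega_{\mathcal{M_F}}(\mathbb{M})$; one should also verify that at least one $\operatorname{Tr}(A_i)>0$ and one $\operatorname{Tr}(B_i)>0$ so that the constructed ensembles are genuine probability distributions. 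All remaining steps are the routine linearity and operator-ordering manipulations already used in the channel case.
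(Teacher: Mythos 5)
Your proposal is correct and follows essentially the same route as the paper's own proof: the upper bound via an optimal free $\mathbb{N}$ with $M_i\leq\lambda N_i$, $N_i\leq\mu M_i$ and linearity of $p_{\text{succ}}$, and the lower bound by building the two ensembles from dual feasible $\{A_i\},\{B_i\}$ (noting $\langle\mathbb{I},A_i\rangle=\operatorname{Tr}(A_i)$, your construction is identical to the paper's), with the same treatment of zero-trace outcomes and well-definedness. No gaps.
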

\begin{proof}
Let us assume that $\Omega_{\mathcal{M_F}}(\mathbb{M})$ is finite and take $M_i\leq \lambda N_i$ and $N_i\leq \mu M_i$ such that $\Omega_{\mathcal{M_F}}(\mathbb{M})=\lambda\mu$. Let $\{p_i,\sigma_i\}$ and $\{q_i,\tau_i\}$ be any state ensembles, and $\{M_i\}_{i=1}^n$ be a given POVM measurement. Then
\begin{equation}
\begin{aligned}
\frac{p_{\text {succ }}\left(\left\{p_{i}, \sigma_{i}\right\},\left\{M_{i}\right\}\right)}{p_{\text {succ }}\left(\left\{q_{i}, \tau_{i}\right\},\left\{M_{i}\right\}\right)}& \leq \frac{\lambda p_{\text {succ }}\left(\left\{p_{i}, \sigma_{i}\right\},\left\{N_{i}\right\}\right)}{\mu^{-1} p_{\text {succ }}\left(\left\{q_{i}, \tau_{i}\right\},\left\{N_{i}\right\} \right)} \\
& \leq \lambda \mu \max _{\mathbb{N} \in \mathcal{M_F}} \frac{p_{\text {succ }}\left(\left\{p_{i}, \sigma_{i}\right\},\left\{N_{i}\right\}\right)}{ p_{\text {succ }}\left(\left\{q_{i}, \tau_{i}\right\},\left\{N_{i}\right\},\right)},
\end{aligned}
\end{equation}
where the first inequality follows by the linearity of $p_{\text{succ}}$. This implies that $\Omega_{\mathcal{M_F}}(\mathbb{M})$ always upper bounds the LHS of Eq.(57).

Consider any dual feasible solutions $\{A_i\}$ and $\{B_i\}$ for $\Omega_{\mathcal{M_F}}(\mathbb{M})$ in dual problem Eq.(54). We define the state ensembles $\{p_i,\sigma_i\}$ and $\{q_i,\tau_i\}$ as
\begin{equation}
\{p_i,\sigma_i\}=\left\{\begin{array}{ll}\vspace{1ex}
{\left\{\frac{\langle \mathbb{I},A_i\rangle}{\sum_i\langle \mathbb{I},A_i\rangle}, \frac{A_i}{\langle \mathbb{I},A_i\rangle}\right\},} &
{\text { if }\langle \mathbb{I},A_i\rangle> 0}, \\ \vspace{1ex}
{\{0, \sigma\},} & {\text { if } \langle \mathbb{I},A_i\rangle=0},
\end{array}\right.
\end{equation}
where $\sigma$ is an arbitrary state, and
\begin{equation}
\{q_i,\tau_i\}=\left\{\begin{array}{ll}\vspace{1ex}
{\left\{\frac{\langle \mathbb{I},B_i\rangle}{\sum_i\langle \mathbb{I},B_i\rangle}, \frac{B_i}{\langle \mathbb{I},B_i\rangle}\right\},} &
{\text { if }\langle \mathbb{I},B_i\rangle> 0}, \\ \vspace{1ex}
{\{0, \sigma\},} & {\text { if } \langle \mathbb{I},B_i\rangle=0},
\end{array}\right.
\end{equation}
where $\tau$ is an arbitrary state. For both state ensembles, we get
\begin{equation}
\begin{aligned}
\frac{\frac{p_{\text {succ }}\left(\left\{p_{i}, \sigma_{i}\right\},\left\{M_{i}\right\}\right)}{p_{\text {succ }}\left(\left\{q_{i}, \tau_{i}\right\},\left\{M_{i}\right\}\right)}}{\mathop{\max} \limits _{\mathbb{N} \in \mathcal{M_F}} \frac{p_{\text {succ }}\left(\left\{p_{i}, \sigma_{i}\right\},\left\{N_{i}\right\}\right)}{p_{\text {succ }}\left(\left\{q_{i}, \tau_{i}\right\},\left\{N_{i}\right\}\right)}}&=\frac{\frac{\sum_i\langle M_i,A_i\rangle}{\sum_i\langle M_i,B_i\rangle}}
{\mathop{\max} \limits _{\mathbb{N} \in \mathcal{M_F}} \frac{\sum_i\langle N_i,A_i\rangle}{\sum_i\langle N_i,B_i\rangle}}\\
&\geq \frac{\sum_i\langle M_i,A_i\rangle}{\sum_i\langle M_i,B_i\rangle},
\end{aligned}
\end{equation}
where the last inequality is since any feasible $A_i$, $B_i$ satisfy $\sum_i\langle N_i,A_i\rangle\leq \sum_i\langle N_i,B_i\rangle, \forall \{N_i\}\in \mathcal{M_F}$. Here we have constrained ourselves to the operators $A_i$, $B_i$ such that $\sum_i\langle M_i, A_i\rangle\neq 0\neq \sum_i\langle M_i,B_i\rangle$ and $\sum_i\langle N_i,A_i\rangle\neq 0\neq\sum_i\langle N_i,B_i\rangle, \forall \{N_i\}\in \mathcal{M_F}$ to ensure that Eq.(57) is well defined. Since the supremum over all feasible
$A_i$, $B_i$ is precisely $\Omega_{\mathcal{M_F}}(\mathbb{M})$, this establishes the projective robustness of channels as a lower bound on the LHS of Eq.(57). The proof is completed.
\end{proof}
Note that the above theorem applies to any convex resource theory of measurements, thus providing a general operational meaning for the projective robustness of measurements.

In Ref.\cite{Skrzypczyk19b}, it is shown that any incompatible measurement gives an advantage over all compatible measurements in quantum state discrimination. It is easy to see that the set of all compatible measurements forms a convex resource theory of measurements. Next, we will discuss the projective robustness of measurement incompatibility.

{\textbf{Measurement incompatibility.}} The set of POVMs $\left\{M_{a \mid x}\right\}$, where $M_{a \mid x}$ is the POVM element with
outcome $a$ for the measurement labeled by setting $x$, is called compatible (or jointly measurable) \cite{Guerini17} if there exists a parent measurement $\left\{N_{i}\right\}$ and a conditional probability distribution $\left\{q(a \mid x, i)\right\}$ such that $M_{a \mid x}=\sum_{i} q(a \mid x, i) N_{i}$, where $a=1,...,n$ and $x=1,...,m$.  Let $\mathfrak{M}$ denote all sets of POVMs, and $\mathfrak{C}$ denote the set of compatible POVMs. For any set of POVMs $\left\{M_{a \mid x}\right\}$, we can directly define the projective robustness of measurement incompatibility as
\begin{equation}
\begin{aligned}
\Omega_{\mathfrak{C}}\left(\left\{M_{a \mid x}\right\}\right)&=\left.\inf\big\{\lambda\mu \mid M_{a \mid x}\leq \lambda N_{a \mid x}, N_{a \mid x}\leq \mu M_{a \mid x}, \forall a, x,\right.\\
&\qquad \left.\{N_{a \mid x}\} \in \mathfrak{C}\big\}\right..
\end{aligned}
\end{equation}

In Ref.\cite{Regula and Takagi21b}, a specific theory of channel based on the theory of measurement incompatibility was formulated by taking $\mathbb{O}_{\text {all }}$ to be the set of channels representing the sets of POVMs as
\begin{equation}
\mathbb{O}_{\text {all }}:=\left\{\mathcal{E}_{\left\{M_{a \mid x}\right\}} \mid\left\{M_{a \mid x}\right\} \in \mathfrak{M}\right\}
\end{equation}
where we defined
\begin{equation}
\mathcal{E}_{\left\{M_{a \mid x}\right\}}(\sigma \otimes \rho):=\sum_{x, a}\left\langle x|\sigma| x\right\rangle\left\langle M_{a \mid x} ,\rho\right\rangle|a\rangle\langle a|
\end{equation}
where $\{|x\rangle\}$ and $\{|a\rangle\}$ are orthonormal bases representing classical variables for measurement settings and measurement outcomes, respectively. Naturally, the set of free channels $\mathbb{O} \subseteq \mathbb{O}_{\text {all }}$ was defined using the set of compatible POVMs $\mathfrak{C}$ as
\begin{equation}
\mathbb{O}:=\left\{\mathcal{E}_{\left\{M_{a \mid x}\right\}} \mid\left\{M_{a \mid x}\right\} \in \mathfrak{C}\right\}.
\end{equation}

\begin{proposition} Let $\mathbb{O}_{\text {all }}$ be the set of bipartite channels defined in (63) and $\mathbb{O}$ be the set of free channels that represents the compatible POVMs. Then, it holds that
\begin{equation}
\Omega_{\mathfrak{C}}\left(\left\{M_{a \mid x}\right\}\right)=\Omega_{\mathbb{O}}(\mathcal{E}_{\{M_{a \mid x}\}}).
\end{equation}
\end{proposition}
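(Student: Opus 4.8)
The plan is to reduce the identity to a single computation: that the max-relative entropy of the channels $\mathcal{E}_{\{M_{a\mid x}\}}$ and $\mathcal{E}_{\{N_{a\mid x}\}}$ equals the max-relative entropy of the underlying POVM sets. First I would compute the Choi matrix of $\mathcal{E}_{\{M_{a\mid x}\}}$ as defined in Eq.(64). Since that definition reads only the diagonal entries $\langle x|\sigma|x\rangle$ of the setting register and produces an operator diagonal in the outcome register, passing half of a maximally entangled state through $\mathrm{id}\otimes\mathcal{E}_{\{M_{a\mid x}\}}$ produces a doubly block-diagonal operator
\[
\mathcal{J}_{\mathcal{E}_{\{M_{a\mid x}\}}}=\sum_{x,a}|x\rangle\langle x|\otimes M_{a\mid x}^{T}\otimes|a\rangle\langle a|,
\]
whose blocks are labeled by the classical data $(x,a)$ and coincide with the effects $M_{a\mid x}$ up to a transpose (the transpose depends on the convention chosen for $\Phi^{+}$). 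I expect this step---correctly tracking the dephasing of the setting register and the appearance of the transpose---to be the only genuinely computational part; everything downstream is structural.

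Given this block-diagonal form, the key observation is that the positive-semidefinite ordering of two such Choi matrices decouples across the blocks. I would argue that $\mathcal{J}_{\mathcal{E}_{\{M_{a\mid x}\}}}\le\lambda\,\mathcal{J}_{\mathcal{E}_{\{N_{a\mid x}\}}}$ holds precisely when $M_{a\mid x}^{T}\le\lambda N_{a\mid x}^{T}$ for all $(x,a)$, and since transposition is a positive linear bijection preserving the operator ordering, this is equivalent to $M_{a\mid x}\le\lambda N_{a\mid x}$ for all $(x,a)$. Taking the infimum over $\lambda$ yields
\[
R_{\max}(\mathcal{E}_{\{M_{a\mid x}\}}\parallel\mathcal{E}_{\{N_{a\mid x}\}})=\inf\{\lambda\mid M_{a\mid x}\le\lambda N_{a\mid x}\ \forall a,x\},
\]
which is exactly $R_{\max}$ of the POVM sets in the sense of Definition 2, and the symmetric argument gives the companion identity with the two sets interchanged.

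Finally I would substitute both identities into Definition 1. By Eq.(65) the free channels $\mathbb{O}$ are precisely the images $\mathcal{E}_{\{N_{a\mid x}\}}$ of the compatible sets $\{N_{a\mid x}\}\in\mathfrak{C}$, so the minimization over $\mathcal{M}\in\mathbb{O}$ in Eq.(9) becomes a minimization over $\{N_{a\mid x}\}\in\mathfrak{C}$. Combining this with the two $R_{\max}$ identities gives
\[
\Omega_{\mathbb{O}}(\mathcal{E}_{\{M_{a\mid x}\}})=\min_{\{N_{a\mid x}\}\in\mathfrak{C}}R_{\max}(\{M_{a\mid x}\}\parallel\{N_{a\mid x}\})\,R_{\max}(\{N_{a\mid x}\}\parallel\{M_{a\mid x}\}),
\]
which is the definition of $\Omega_{\mathfrak{C}}(\{M_{a\mid x}\})$ in Eq.(62). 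The degenerate case needs no separate treatment: the block structure shows that the matching Choi-support condition in part (i) of Theorem 1 is equivalent to $\operatorname{supp}(M_{a\mid x})=\operatorname{supp}(N_{a\mid x})$ for every $(x,a)$, so the two quantities are infinite together. The one point to state carefully is that every free channel in $\mathbb{O}$ does arise from a compatible POVM, which is immediate from Eq.(65).
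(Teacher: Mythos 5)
Your proof is correct and takes essentially the same approach as the paper: both reduce the CP ordering $\mathcal{E}_{\{M_{a\mid x}\}}\le\lambda\,\mathcal{E}_{\{N_{a\mid x}\}}$ to the effect-wise ordering $M_{a\mid x}\le\lambda N_{a\mid x}$ for all $(a,x)$ and then identify the resulting optimization with Eq.~(62). The only difference is bookkeeping: you decouple the ordering via the block-diagonal Choi matrix $\sum_{x,a}|x\rangle\langle x|\otimes M_{a\mid x}^{T}\otimes|a\rangle\langle a|$, whereas the paper evaluates the difference map on product inputs $\sigma\otimes\rho$ and uses the orthogonality of the outcome states $|a\rangle\langle a|$ --- two equivalent formulations of the same step.
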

\begin{proof}
Following the definition of $\Omega_{\mathbb{O}}$,
\begin{equation}
\begin{split}
&\Omega_{\mathbb{O}}\left(\mathcal{E}_{\left\{M_{a \mid x}\right\}}\right)=\inf \left\{\lambda\mu \mid \mathcal{E}_{\left\{M_{a \mid x}\right\}}\leq \gamma\mathcal{F},\mathcal{F}\leq \mu\mathcal{E}_{\left\{M_{a \mid x}\right\}}, \mathcal{F} \in \mathbb{O}\right\} \\
&\quad=\inf \left\{\lambda\mu \mid \mathcal{E}_{\left\{M_{a \mid x}\right\}}\leq \lambda\mathcal{E}_{\left\{F_{a \mid x}\right\}}, \mathcal{E}_{\left\{F_{a \mid x}\right\}}\leq \mu\mathcal{E}_{\left\{M_{a \mid x}\right\}}, \left\{F_{a \mid x}\right\} \in \mathfrak{C}\right\} \\
&\quad=\inf \left\{\lambda\mu \ | \ \forall \sigma, \rho, \sum_{x, a}\langle x|\sigma| x\rangle  \left\langle\lambda F_{a \mid x}-M_{a \mid x}, \rho\right\rangle|a\rangle\langle a|\geq 0,\right. \\
&\qquad\left.\quad\sum_{x, a}\langle x|\sigma| x\rangle  \left\langle\mu M_{a \mid x}- F_{a \mid x}, \rho\right\rangle|a\rangle\langle a|\geq 0,  \left\{F_{a \mid x}\right\} \in \mathfrak{C}\right\} \\
&\quad=\inf \left\{\lambda\mu \mid M_{a \mid x}\leq \lambda F_{a \mid x}, F_{a \mid x}\leq \mu M_{a \mid x} \forall a, x,\left\{F_{a \mid x}\right\} \in \mathfrak{C}\right\},
\end{split}
\end{equation}
where the last equality is obtained by observing that in the expression of the the third line, the fact that two inequalities hold for any $\sigma$, as well as that $|a\rangle\langle a|$ is a classical state, implies that $\left\langle\lambda F_{a \mid x}-M_{a \mid x}, \rho\right\rangle\geq 0$ and $\left\langle\mu M_{a \mid x}- F_{a \mid x}, \rho\right\rangle\geq 0$, $\forall a,x$,
and further imposing that this holds for any $\rho$ gives $\lambda F_{a \mid x}-M_{a \mid x}\geq0$ and $\mu M_{a \mid x}- F_{a \mid x}\geq0$, $\forall a,x$. On the other hand, $\lambda F_{a \mid x}-M_{a \mid x}\geq0$ and $\mu M_{a \mid x}- F_{a \mid x}\geq0$, $\forall a,x$ imply the expression in the third line. Note that the last expression is precisely $\Omega_{\mathfrak{C}}\left(\left\{M_{a \mid x}\right\}\right)$. The proof is completed.
\end{proof}
Proposition 2 shows that the projective robustness of measurement incompatibility coincides with the channel-based measure defined in our framework. This helps to further study resource manipulation of incompatible measurements.
%%%%%

%%%%%%%%%%%%%%%%%%%%%%%%%%%%%%%
\section{Conclusion}

In this work, we introduced projective robustness for channels and measurements in any convex quantum resource theory and discussed their operational interpretations in simultaneous discrimination and exclusion of states. First, we defined the projective robustness of channels in any convex dynamic resource theory and showed that it satisfies some good properties such as quasiconvexity, invariance under scaling, lower semicontinuity, and sub- or supermultiplicativity. Importantly, it can always be computed as a convex (conic) optimization problem. Moreover, the projective robustness of channels provided the lower bounds on the error and many-copy distillation overhead in any deterministic channel resource distillation task.  Meanwhile, the projective robustness of channels could be regarded as the maximal advantage provided by a given resourceful channel over all resourceless channels in simultaneous discrimination and exclusion of a fixed state ensemble by an application of the bipartite channel. Second, we defined the projective robustness of measurements in any convex resource theory of measurements. Similarly, we proved that it has some good properties and showed that it can exactly quantify the maximal advantage provided by a given resourceful measurement over all resourceless measurements in the simultaneous discrimination and exclusion of two fixed state ensembles. Finally, in a specific channel resource setting based on measurement incompatibility, we found an equivalence relation between the projective robustness for quantum channels and the projective robustness for measurement incompatibility.

Our results apply to general convex resource theories of channels and measurements. We hope that the work can deepen one's comprehension of the quantification and manipulation of channel resources and effectively facilitate one's capacity of exploiting dynamic resources.

\section{Acknowledge}
This paper was supported by National Science Foundation of China (Grant Nos:12071271, 11671244, 62001274), the Higher School Doctoral Subject Foundation of Ministry of Education of China (Grant No:20130202110001) and the Research Funds for the Central Universities (GK202003070).

\section*{Appendix A: The proof of Theorem 3}
The proofs of (i)-(iii) are similar as the (i)-(iii) of  Theorem 3.

(iv) If $\Omega_{\mathbb{O}}(\mathbb{M})=\infty$, then the result is trivial, so we shall assume
otherwise. Let $\{F_i\}\in \mathcal{M_F}$ be a free measurement such that $M_i\leq\lambda F_i$ and $F_i\leq\mu M_i$ for any $i$ with $\Omega_{\mathcal{M_F}}(\mathbb{M})=\lambda\mu$. Since $\Gamma$ is a free operation, we have $\{\Gamma(F_i)\}_i\in \mathcal{M_F}$ , and $\Gamma(M_i)\leq\lambda \Gamma(F_i)$ and $\Gamma(F_i)\leq\mu \Gamma(M_i)$ for any $i$. So we can directly get $\Omega_{\mathcal{M_F}}(\Gamma(\mathbb{M}))\leq \Omega_{\mathcal{M_F}}(\mathbb{M})$.

(v) We can see that
\begin{equation}
\begin{aligned}
\Omega_{\mathcal{M_F}}(\mathbb{M})&=\min_{\mathbb{N}\in \mathcal{M_F}}\left[ \inf\{\lambda \mid M_i\leq \lambda N_i, \forall i\}
\inf\{\mu \mid N_i\leq \mu M_i, \forall i\}\right]\\
&=\inf\{\lambda\mu \mid M_i\leq \lambda N_i, N_i\leq \mu M_i, \forall i, \{N_i\}\in \mathcal{M_F}\}.
\end{aligned}
\end{equation}
Let $\widetilde{N}_i=\lambda N_i$ for all $i$. Observe that any feasible solution to the problem
\begin{equation}
\inf\left\{\gamma \ \Big| \ M_i\leq \widetilde{N}_i\leq \gamma M_i, \{\widetilde{N}_i\}\in \operatorname{cone}(\mathcal{M_F})\right\}
\end{equation}
gives a feasible solution to Eq.(68) as
\begin{equation}
N_i=\frac{\widetilde{N}_i}{\frac{\sum_i\langle \mathbb{I},\widetilde{N}_i\rangle}{d}}, \lambda=\frac{\sum_i\langle \mathbb{I},\widetilde{N}_i\rangle}{d}, \mu=\frac{\gamma}{\frac{\sum_i\langle \mathbb{I},\widetilde{N}_i\rangle}{d}}
\end{equation}
with objective function value $\lambda\mu=\gamma$. Conversely, any feasible solution $\{\{N_i\}, \lambda, \mu\}$ to Eq.(69) gives a feasible solution to Eq.(68) as $\widetilde{N}_i=\lambda N_i, \forall i$ and $ \gamma=\lambda\mu$. Thus, the two problems are equivalent.\\
Writing the Lagrangian as
\begin{equation}
\begin{split}
&L(\gamma,\widetilde{N}_i;\{A_i\},\{B_i\},\{C_i\})\\
&=\gamma-\sum_i\langle A_i,\widetilde{N}_i-M_i\rangle-\sum_i\langle B_i,\gamma M_i-\widetilde{N}_i\rangle-\sum_i\langle C_i,\widetilde{N}_i\rangle\\
&=\gamma[1-\sum_i\langle B_i,M_i\rangle]+\sum_i\langle A_i,M_i\rangle+\sum_i\langle B_i-A_i-C_i,\widetilde{N}_i\rangle.
\end{split}
\end{equation}
Optimizing over the Lagrange multipliers $A_i, B_i\geq 0$ for all $i$, and $\{C_i\}\in \operatorname{cone}(\mathcal{M_F})^*$ , the corresponding dual form of primal problem Eq.(69) is written as
\begin{equation}
\begin{aligned}
&\left.\sup\Bigg\{\sum_i\langle A_i,M_i\rangle \ \Big| \sum_i\langle B_i,M_i\rangle=1,\ A_i,B_i\geq 0 ~\forall i, \right.\\
&~~~~~~~~~~~~~~~~~~~~~~~~~~~~~~~~~~~~~~~~~~~~~~~~~~~~~~~\left.\{B_i-A_i\}\in \operatorname{cone}(\mathcal{M_F})^* \Bigg\}\right.\\
&=\sup\left\{\frac{\sum_i\langle A_i,M_i\rangle}{\sum_i\langle B_i,M_i\rangle} \ \Big| \  A_i,B_i\geq 0 ~\forall i, \frac{\sum_i\langle A_i,N_i\rangle}{\sum_i\langle B_i,N_i\rangle}\leq 1, \forall \{N_i\}\in \mathcal{M_F} \right\}.
\end{aligned}
\end{equation}
If we take $B_i=\mathbb{I}$ and $A_i=\epsilon\mathbb{I}$ for all $i$ where $0<\epsilon<1$, it is obvious that $A_i$ and $B_i$ are strictly feasible for the dual. Thus, it follows from Slater's theorem \cite{Boyd04} that strong duality holds, the optimal value of the primal problem Eq.(69) is equal to that of dual problem Eq.(72).

The second line of Eq.(72) follows since any feasible solution to this program can be rescaled as $A_i\mapsto\frac{A_i}{\sum_i \langle B_i,M_i\rangle}$, $B_i\mapsto\frac{B_i}{\sum_i \langle B_i,M_i\rangle}$ to give a feasible solution to the dual, and vice versa. Here, we implicitly constrain ourselves to $B_i$ such that $\sum_i \langle B_i,M_i\rangle\neq 0$ and $\sum_i\langle B_i,N_i\rangle\neq 0 ~\forall \{N_i\}\in\mathcal{M_F}$; this can always be ensured by taking $B_i$ as a small multiple of the identity.

(vi) The lower bound is obtained by noting that
\begin{equation}
\begin{aligned}
\Omega_{\mathcal{M_F}}(\mathbb{M}) &=\min_{\mathbb{N} \in \mathcal{M_F}} R_{\max }(\mathbb{M} \| \mathbb{N}) R_{\max }(\mathbb{N} \| \mathbb{M}) \\
& \geq\left[\min _{\mathbb{N} \in \mathcal{M_F}} R_{\max }(\mathbb{M} \| \mathbb{N})\right]\left[\min _{\mathbb{N} \in \mathcal{M_F}} R_{\max }(\mathbb{N} \| \mathbb{M})\right] \\
&=R_{\mathcal{M_F}}(\mathbb{M}) W_{\mathcal{M_F}}(\mathbb{M})^{-1}.
\end{aligned}
\end{equation}
The upper bounds follow by using $\mathbb{N}^*_{R}$ and $\mathbb{N}^*_{W}$ as feasible solutions in the definition of $\Omega_{\mathcal{M_F}}$.

\end{document}